\documentclass[11pt,english]{article}
\usepackage[latin9]{inputenc}
\usepackage[a4paper]{geometry}
\geometry{verbose,tmargin=3cm,bmargin=3cm,lmargin=3cm,rmargin=2.5cm,headsep=1cm,footskip=1cm,columnsep=1cm}

\usepackage{color}
\usepackage{amsmath}
\usepackage{amsthm}
\usepackage{amssymb}
\usepackage{amsmath}
\usepackage{mathcomp}
\usepackage{dsfont}
\usepackage[toc,page]{appendix}
\usepackage{graphicx} 
\usepackage{subfigure}
\usepackage{enumerate}
\usepackage[dvipsnames]{xcolor}
\usepackage{physics}
\usepackage{comment}

\usepackage{hyperref}

\setlength{\textwidth}{17cm} \setlength{\textheight}{22cm}
\setlength{\topmargin}{0cm} 
\setlength{\oddsidemargin}{-0.4cm}
\setlength{\evensidemargin}{-0.4cm}

\newcommand{\Gfun}{\mathcal{G}}
\newcommand{\Efun}{\mathcal{E}}
\newcommand{\Ffun}{\mathcal{F}}
\newcommand{\Rthree}{\mathbb{R}^3}
\newcommand{\infspec}{\text{inf\,spec\,}}

\newcommand\R{\mathbb{R}}
\newcommand\eps{\varepsilon}

\DeclareMathOperator{\dist}{dist}

\newtheorem{definition}{Definition}[section]
\newtheorem{lemma}[definition]{Lemma}

\newtheorem{theorem}[definition]{Theorem}
\newtheorem{corollary}[definition]{Corollary}
\theoremstyle{definition}
\newtheorem{remark}[definition]{Remark}

\numberwithin{equation}{section}

\def\Re{\mathrm{Re}}
\def\Im{\mathrm{Im}}

\title{Persistence of the spectral gap for the Landau--Pekar equations}
\author{Dario Feliciangeli, Simone Rademacher and Robert Seiringer }

\begin{document}
\maketitle 

\begin{abstract}
The Landau--Pekar equations  describe the dynamics of a strongly coupled polaron. Here we provide a class of initial data for which the associated effective Hamiltonian has a uniform spectral gap for all times. For such initial data, this allows us to extend the results on the adiabatic theorem for the Landau--Pekar equations and their derivation from the Fr\"ohlich model obtained in \cite{LRSS,LMRSS} to larger times. 
\end{abstract}

\section{Introduction and Main Results}
\label{section1}

The Landau--Pekar equations \cite{landau} provide an effective description of the dynamics for a strongly coupled polaron, modeling an electron moving in an ionic crystal. The strength of the interaction of the electron with its self-induced polarization field is described by a coupling parameter $\alpha>0$. In this system of coupled differential equations, the time evolution of the electron wave function $\psi_t \in H^1( \mathbb{R}^3)$ is governed by a Schr\"odinger equation with respect to an effective Hamiltonian $h_{\varphi_t}$ depending on the polarization field  $\varphi_t \in L^2( \mathbb{R}^3)$, which evolves according to a classical field equation. Motivated by the recent work in \cite{LRSS,M,LMRSS}, we are interested in initial data for which the Hamiltonian $h_{\varphi_t}$ possesses a uniform  spectral gap (independent of $t$ and $\alpha$) above the infimum of its spectrum. 

The Landau--Pekar equations are of the form
\begin{equation}\label{eq:LP}
\begin{aligned}
 i \partial_t \psi_t  & = h_{ \varphi_t  }  \psi_t  \\
i \alpha^2 \partial_t \varphi_t  & = \varphi_t + \sigma_{\psi_t}
\end{aligned}
\end{equation}
with
\begin{equation}
h_{\varphi} = - \Delta  + V_\varphi, \quad V_\varphi(x) = 2 (2\pi)^{3/2} \Re\, [ ( - \Delta)^{-1/2} \varphi ] (x), \quad \sigma_{\psi} (x) = (2\pi)^{3/2}\left[ (- \Delta)^{-1/2} \vert \psi \vert^2\right]  (x) .
\end{equation}
For initial data $( \psi_0, \varphi_0) \in H^1( \mathbb{R}^3 ) \times L^2( \mathbb{R}^3) $, \eqref{eq:LP} 
is well-posed for all times $t \in \mathbb{R}$ (see \cite{FG} or Lemma \ref{lemma:wellposedness} below). 

For $(\psi,\varphi)\in H^1(\Rthree)\times L^2(\Rthree)$ with $\|\psi\|_2=1$, the energy functional corresponding to the Landau--Pekar equations is defined as
\begin{equation}\label{def:G}
\mathcal{G} ( \psi, \varphi ) = \langle \psi, h_\varphi \psi \rangle + \| \varphi \|_2^2.
\end{equation}
One readily checks that for solutions of \eqref{eq:LP}, $\mathcal{G}(\psi_t,\varphi_t)$ is independent of $t$ \cite[Lemma 2.1]{FG}, and the same holds for $\|\psi_t\|_2$.  
We also define 
\begin{equation}\label{def:EF}
\Efun(\psi)=\inf_{\varphi \in L^2(\Rthree)} \Gfun(\psi,\varphi), \quad \quad \Ffun(\varphi)=\inf_{\psi \in H^1(\Rthree) \atop \|\psi\|_2=1 } \Gfun(\psi,\varphi).
\end{equation}
These three functionals are known as Pekar functionals and we shall discuss some of their properties in Section~\ref{section2}. It follows from the work in \cite{L} that there exist $( \psi_{\mathrm{P}}, \varphi_{\mathrm{P}}) \in H^1( \mathbb{R}^3) \times L^2( \mathbb{R}^3)$ with $\|\psi_{\mathrm{P}}\|_2=1$, called Pekar minimizers, realizing 
\begin{equation}
\inf_{\psi,\varphi} \Gfun(\psi,\varphi) =  \Gfun ( \psi_{\mathrm{P}}, \varphi_{\mathrm{P}})=\Efun(\psi_{\mathrm{P}})=\Ffun(\varphi_{\mathrm{P}}) = e_{\mathrm{P}} < 0 \,,
\end{equation}
and $( \psi_{\mathrm{P}}, \varphi_{\mathrm{P}})$ is unique up to symmetries (i.e., translations and multiplication of $\psi_{\mathrm{P}}$ by a constant phase factor). 
We also note that the Hamiltonian $h_{\varphi_{\mathrm{P}}}$ has a spectral gap above its ground state energy, i.e., $\Lambda(\varphi_{\mathrm{P}}) >0$, where we denote for general $\varphi \in L^2(\R^3)$
\begin{equation}\label{def:gap}
\Lambda ( \varphi) = \inf_{\substack{\lambda \in \mathrm{spec}( h_{\varphi}) \\ \lambda \not= e(\varphi) }} \vert \lambda - e( \varphi) \vert  \quad \text{with} \quad e(\varphi) = \infspec h_\varphi \,.
\end{equation}

In the following we consider solutions $(\psi_t, \varphi_t)$ to the Landau--Pekar equations \eqref{eq:LP} with initial data $( \psi_0, \varphi_0)$ such that its energy $\Gfun(\psi_0,\varphi_0)$ is sufficiently close to $e_{\mathrm{P}}$, and show that for such initial data 
the Hamiltonian $h_{\varphi_t}$ possesses a uniform spectral gap above the infimum of its spectrum for all times $t \in \mathbb{R}$ and any coupling constant $\alpha>0$. This is  the content of the following Theorem.

\begin{theorem}
\label{thm:specgap}
For any $0 < \Lambda < \Lambda ( \varphi_{\mathrm{P}})$ there exists $\varepsilon_\Lambda >0$ such that if  $( \psi_t, \varphi_t )$ is the solution of the Landau--Pekar equations \eqref{eq:LP} with initial data $(\psi_0, \varphi_0) \in H^1( \mathbb{R}^3) \times L^2( \mathbb{R}^3)$  with $\| \psi_0 \|_2=1$  and $\mathcal{G} (\psi_0, \varphi_0) \leq e_{\mathrm{P}} + \varepsilon_\Lambda$, then
\begin{equation}
\label{ass:1}
\Lambda (\varphi_t )\geq \Lambda \quad \mathrm{for \, all}  \quad t\in \mathbb{R}, \, \alpha >0.  
\end{equation}
\end{theorem}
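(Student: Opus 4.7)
The plan is to use conservation of $\Gfun$ along the Landau--Pekar flow to reduce the dynamical assertion to a static one about near-minimizers of $\Ffun$, and then combine (i) a stability result for such near-minimizers with (ii) continuity of the map $\varphi \mapsto \Lambda(\varphi)$ at the Pekar minimizer $\varphi_{\mathrm{P}}$.

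\textbf{Step 1 (Reduction via energy conservation).} Since $\Gfun(\psi_t,\varphi_t)$ and $\|\psi_t\|_2$ are preserved by the flow, the definition of $\Ffun$ gives
\[
\Ffun(\varphi_t) \le \Gfun(\psi_t,\varphi_t) = \Gfun(\psi_0,\varphi_0) \le e_{\mathrm{P}} + \eps_\Lambda \qquad \text{for every } t \in \R, \ \alpha>0.
\]
It therefore suffices to prove the static statement: for each $0<\Lambda<\Lambda(\varphi_{\mathrm{P}})$ there is $\eps_\Lambda>0$ such that $\Ffun(\varphi)\le e_{\mathrm{P}} + \eps_\Lambda$ implies $\Lambda(\varphi)\ge \Lambda$. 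No property of the dynamics is needed beyond this reduction.

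\textbf{Step 2 (Stability of near-minimizers of $\Ffun$).} I would argue by contradiction: suppose there exists a sequence $(\varphi_n)\subset L^2(\Rthree)$ with $\Ffun(\varphi_n)\to e_{\mathrm{P}}$ and $\Lambda(\varphi_n)<\Lambda$. Hardy--Littlewood--Sobolev and Sobolev provide $e(\varphi) \ge -C\|\varphi\|_2$, so $\Ffun(\varphi) \ge \|\varphi\|_2^2 - C\|\varphi\|_2$ and hence $(\varphi_n)$ is bounded in $L^2$. Applying concentration-compactness to the Pekar functional (vanishing is ruled out because $e_{\mathrm{P}}<0$, and dichotomy by strict subadditivity of the Pekar infimum under splitting of mass), one obtains translations $y_n \in \Rthree$ such that $\varphi_n(\cdot + y_n)$ converges strongly in $L^2$ to some minimizer of $\Ffun$. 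Uniqueness of the Pekar minimizer up to symmetries then identifies this limit with $\varphi_{\mathrm{P}}$ (after a further translation if necessary). Since $h_{\varphi(\cdot-y)}$ and $h_\varphi$ are unitarily equivalent, $\Lambda$ is translation invariant and we may assume $\varphi_n \to \varphi_{\mathrm{P}}$ in $L^2$.

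\textbf{Step 3 (Continuity of $\Lambda$ at $\varphi_{\mathrm{P}}$).} For $\varphi \in L^2(\Rthree)$, Hardy--Littlewood--Sobolev yields $\|V_\varphi - V_{\varphi_{\mathrm{P}}}\|_6 \le C\|\varphi - \varphi_{\mathrm{P}}\|_2$; in three dimensions $L^6$-potentials are infinitesimally $-\Delta$-form-bounded, so $h_{\varphi_n} \to h_{\varphi_{\mathrm{P}}}$ in the norm-resolvent sense. The essential spectrum of $h_{\varphi_{\mathrm{P}}}$ equals $[0,\infty)$, and its ground state is a simple isolated eigenvalue separated from the remainder of the spectrum by $\Lambda(\varphi_{\mathrm{P}})$, hence norm-resolvent convergence yields $e(\varphi_n) \to e(\varphi_{\mathrm{P}})$ and $\Lambda(\varphi_n) \to \Lambda(\varphi_{\mathrm{P}})$, contradicting $\Lambda(\varphi_n) < \Lambda < \Lambda(\varphi_{\mathrm{P}})$ for large $n$.

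The main obstacle is Step 2: one has to run the concentration-compactness machinery for $\Ffun$ and invoke Lenzmann's uniqueness of the Pekar minimizer to pin down the limit. The ingredients (boundedness of minimizing sequences, strict subadditivity, existence and uniqueness of minimizers) are classical, but gluing them into the qualitative ``near-minimizer is $L^2$-close to a translate of $\varphi_{\mathrm{P}}$'' statement is the crux. Step 3, by contrast, is comparatively soft, relying only on the Hardy--Littlewood--Sobolev control of $V_\varphi$ and standard perturbation theory for isolated simple eigenvalues.
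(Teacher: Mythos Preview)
Your proposal is correct and follows the same three-step skeleton as the paper: reduce via energy conservation to a static statement about near-minimizers of $\Ffun$, show these are $L^2$-close to a translate of $\varphi_{\mathrm{P}}$, then use continuity of the spectral gap at $\varphi_{\mathrm{P}}$ (together with its translation invariance). The difference lies in how Steps~2 and~3 are executed. The paper proves a \emph{quantitative} quadratic lower bound $\Ffun(\varphi)-e_{\mathrm{P}}\ge \tau\,\dist_{L^2}^2(\varphi,\Omega(\varphi_{\mathrm{P}}))$ (Lemma~\ref{lemma:FQuadBounds}), obtained from the corresponding bound for $\Efun$ and ultimately resting on Lenzmann's non-degeneracy of the Pekar Hessian; it then combines this with an explicit min-max perturbation estimate for $\Lambda$ (Lemma~\ref{lemma:specgap_pertub}). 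Your route is softer: a contradiction argument reduces Step~2 to compactness of minimizing sequences for $\Ffun$ modulo translation, and Step~3 is handled by norm-resolvent convergence. This avoids Lenzmann entirely --- Lieb's compactness and uniqueness for $\Efun$ suffice, via the identity $\Gfun(\psi,\varphi)=\Efun(\psi)+\|\varphi+\sigma_\psi\|_2^2$ (indeed, the concentration-compactness you invoke is most cleanly run on the $\psi$-side and then transferred to $\varphi$ through $\sigma_\psi$, rather than directly on $\Ffun$ as you phrase it). The trade-off is that your argument gives no explicit relation between $\varepsilon_\Lambda$ and $\Lambda$, whereas the paper's quantitative bounds are precisely what feed into the explicit constants in Corollaries~\ref{cor:adiab} and~\ref{cor:normapprox}.
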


Theorem \ref{thm:specgap} is proved in Section~\ref{section3}. It provides a class of initial data for the Landau--Pekar equations for which the Hamiltonian $h_{\varphi_t}$ has a uniform spectral gap for all times $t \in \mathbb{R}$. The existence of initial data with this particular property is of relevance for recent work \cite{LRSS,M,LMRSS} on the adiabatic theorem for the Landau--Pekar equations, and  on their derivation from the Fr\"ohlich model (where the polarization is described as a quantum field instead). For this particular initial data, the results  obtained there can then be extended in the following way:

\paragraph{Adiabatic theorem.} Due to the separation of time scales in \eqref{eq:LP}, the Landau--Pekar equations decouple adiabatically for large $\alpha$ (see \cite{LRSS} or also \cite{FG_a} for an analogous one-dimensional model). To be more precise, in \cite{LRSS} the initial phonon state function is assumed to satisfy
\begin{equation}
\label{eq:ass_old}
\varphi_0 \in L^2( \mathbb{R}^3) \quad \mathrm{with} \quad e( \varphi_0) = \infspec  h_{\varphi_0} <0,
\end{equation}
which implies that $h_{\varphi_0}$ has a spectral gap and that there exists a unique positive and normalized ground state $\psi_{\varphi_0}$ of $h_{\varphi_0}$. Under this assumption, denoting by $( \psi_t, \varphi_t) $ the solution of the Landau--Pekar equations \eqref{eq:LP} with initial data $(\psi_{\varphi_0}, \varphi_0)$, \cite[Thm.~II.1 \& Rem.~II.3]{LRSS} proves that there exist constants $C,T>0$ (depending on $\varphi_0$) such that 
\begin{equation}
\label{eq:adiab_old}
\| \psi_t - e^{-i \int_0^t ds \, e( \varphi_s ) } \psi_{\varphi_t} \|_2^2 \leq C \alpha^{-4}   \quad \text{for all} \quad |t| \leq T \alpha^2 ,
\end{equation}
where $\psi_{\varphi_t}$ denotes the unique positive and normalized ground state of $h_{\varphi_t}$. The restriction on $|t|$ in \eqref{eq:adiab_old} is due to the need of ensuring that the spectral gap of the effective Hamiltonian $h_{\varphi_t}$ does not become too small for initial data satisfying \eqref{eq:ass_old}, which is only proven (in \cite[Lemma II.1]{LRSS})   for times $|t| \leq T \alpha^2$. Nevertheless, {\em assuming} that there exists $\Lambda >0$ such that $\Lambda ( \varphi_t ) > \Lambda$ for all times $t \in \mathbb{R}$, the adiabatic theorem in \cite[Thm.~II.1]{LRSS} allows to approximate $\psi_t$ by $e^{-i \int_0^t ds \, e( \varphi_s ) } \psi_{\varphi_t} $ for all  times $|t| \ll \alpha^4$. This raises the question about initial data for which the existence of a spectral gap of order one holds true for longer times, and Theorem \ref{thm:specgap} answers this question. In fact, by suitably adjusting the phase factor, we can prove the following  stronger result.

\begin{corollary}
\label{cor:adiab}
Let $\varphi_0\in L^2(\Rthree)$ be such that
\begin{equation}
\label{ass2}
\Ffun(\varphi_0) \leq e_{\mathrm{P}} + \varepsilon
\end{equation}
for  sufficiently small $\varepsilon>0$. Then $h_{\varphi_0}$ has a ground state $\psi_{\varphi_0}$. Let $( \psi_t, \varphi_t)$ be the solution to the Landau--Pekar equations \eqref{eq:LP} with initial data $( \psi_{\varphi_0}, \varphi_0) $ and define 
\begin{equation}
\label{eq:phase_add}
\nu (s) = - \alpha^{-4} \langle \psi_{\varphi_s}, \, V_{\Im\, \varphi_s} R_{\varphi_s}^3 V_{\Im\, \varphi_s} \psi_{\varphi_s} \rangle \quad \text {and} \quad\widetilde{\psi}_t=e^{i\int_0^t ds(e(\varphi_s)+\nu(s))}\psi_t,
\end{equation}
where $R_{\varphi_s} = q_s ( h_{\varphi_s} - e( \varphi_s))^{-1} q_s$ with $q_s = 1 - \vert \psi_{\varphi_s} \rangle \langle \psi_{\varphi_s} \vert$. Then,  there exists a $C>0$ (independent of $\varphi_0$ and $\alpha$) such that 
\begin{equation}\label{eq:adiab_comb}
\| \widetilde{\psi}_t - \psi_{\varphi_t} \|_2^2 \leq C \varepsilon \alpha^{-4}   \left( 1 +  \alpha^{-2} |t|\right) e^{C \alpha^{-4} |t|}  \,.
\end{equation}
\end{corollary}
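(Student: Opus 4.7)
The plan is to combine the uniform-gap result of Theorem \ref{thm:specgap} with standard adiabatic perturbation theory, using the definition of $\nu(t)$ in \eqref{eq:phase_add} to cancel the leading $O(\alpha^{-4})$ secular phase drift.

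Since $V_\varphi$ depends only on $\Re\varphi$, one has the identity $\Ffun(\varphi) = \Ffun(\Re\varphi) + \|\Im\varphi\|_2^2 \geq e_{\mathrm{P}} + \|\Im\varphi\|_2^2$. Fix $\Lambda = \Lambda(\varphi_{\mathrm{P}})/2$ and take $\eps \leq \varepsilon_\Lambda$ so small that $e_{\mathrm{P}}+\eps<0$. Then \eqref{ass2} gives $e(\varphi_0) \leq \Ffun(\varphi_0) < 0$, and standard Schr\"odinger-operator theory (bottom of the spectrum below the essential spectrum $[0,\infty)$) together with the spectral gap from Theorem \ref{thm:specgap} at $t=0$ produces a normalized positive ground state $\psi_{\varphi_0}$ with $\Gfun(\psi_{\varphi_0},\varphi_0) = \Ffun(\varphi_0) \leq e_{\mathrm{P}}+\eps$. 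Conservation of $\Gfun$ along \eqref{eq:LP} together with Theorem \ref{thm:specgap} then yields the uniform gap $\Lambda(\varphi_t)\geq \Lambda$ for every $t\in\R$ and $\alpha>0$, while the identity above applied to $\varphi_t$, combined with $\Ffun(\varphi_t) \leq \Gfun(\psi_t,\varphi_t)\leq e_{\mathrm{P}}+\eps$, gives $\|\Im\varphi_t\|_2^2\leq \eps$ uniformly in $t$. This last bound is the source of the $\eps$ factor in \eqref{eq:adiab_comb}.

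Turning to the adiabatic expansion, $\widetilde{\psi}_t$ satisfies $i\partial_t\widetilde{\psi}_t = (h_{\varphi_t}-e(\varphi_t)-\nu(t))\widetilde{\psi}_t$. The second equation in \eqref{eq:LP} together with $V_\varphi=V_{\Re\varphi}$ yields $\partial_t h_{\varphi_t} = \alpha^{-2}V_{\Im\varphi_t}$, and differentiating $h_{\varphi_t}\psi_{\varphi_t} = e(\varphi_t)\psi_{\varphi_t}$ in the parallel-transport gauge $\langle\psi_{\varphi_t}, \partial_t\psi_{\varphi_t}\rangle = 0$ gives $\partial_t\psi_{\varphi_t} = -\alpha^{-2}R_{\varphi_t}V_{\Im\varphi_t}\psi_{\varphi_t}$. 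Decomposing $\widetilde{\psi}_t = a_t\psi_{\varphi_t}+\eta_t$ with $\eta_t = q_t\widetilde{\psi}_t$, one has $\|\widetilde{\psi}_t - \psi_{\varphi_t}\|_2^2 = |a_t-1|^2 + \|\eta_t\|_2^2$. The leading-order adiabatic correction is obtained from $(h_{\varphi_t}-e(\varphi_t))\eta_t^{(1)} = -i\partial_t\psi_{\varphi_t}$ on $\mathrm{ran}(q_t)$, namely $\eta_t^{(1)} = -i\alpha^{-2}R_{\varphi_t}^2 V_{\Im\varphi_t}\psi_{\varphi_t}$, with $\|\eta_t^{(1)}\|_2^2 \lesssim \eps\alpha^{-4}$ thanks to the uniform gap and $\|\Im\varphi_t\|_2^2\leq\eps$. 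Writing $\eta_t = \eta_t^{(1)} + r_t$ and applying Duhamel with the uniform gap (so all resolvent bounds are $t$-independent) leads to a Gr\"onwall inequality of the form $\|r_t\|_2^2 \leq C\eps\alpha^{-4}(1+\alpha^{-2}|t|) + C\alpha^{-4}\int_0^{|t|}\|r_s\|_2^2\,ds$, from which $\|\eta_t\|_2^2 \leq C\eps\alpha^{-4}(1+\alpha^{-2}|t|)e^{C\alpha^{-4}|t|}$.

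For the parallel component, projecting the equation for $\widetilde{\psi}_t$ onto $\psi_{\varphi_t}$ and using parallel transport gives $i\partial_t a_t = -\nu(t) a_t - i\alpha^{-2}\langle\psi_{\varphi_t}, V_{\Im\varphi_t}R_{\varphi_t}\eta_t\rangle$. Substituting $\eta_t^{(1)}$ yields exactly $-\alpha^{-4}\langle\psi_{\varphi_t}, V_{\Im\varphi_t}R_{\varphi_t}^3 V_{\Im\varphi_t}\psi_{\varphi_t}\rangle = \nu(t)$, so the specific choice of $\nu$ in \eqref{eq:phase_add} cancels the secular drift: $i\partial_t(a_t-1) = -\nu(t)(a_t-1) + O(\alpha^{-2}\sqrt{\eps}\,\|r_t\|_2)$. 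Integrating from $a_0 = 1$ bounds $|a_t-1|^2$ by a quantity of the same form as $\|\eta_t\|_2^2$ (with additional smallness, so that $\|\eta_t\|_2^2$ is the dominant contribution), producing \eqref{eq:adiab_comb}. The main technical obstacle is verifying the Gr\"onwall inequality for $r_t$ with the $\eps\alpha^{-4}$ prefactor rather than merely $\alpha^{-4}$; this requires controlling the time derivatives of $R_{\varphi_t}^2 V_{\Im\varphi_t}\psi_{\varphi_t}$ in terms of $\|\Im\varphi_t\|_2$ and $\alpha^{-2}$, using both the uniform spectral gap and the detailed structure of \eqref{eq:LP}.
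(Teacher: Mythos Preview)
Your overall framework is sound and represents a genuine alternative to the paper's approach: where the paper computes $\|\widetilde\psi_t-\psi_{\varphi_t}\|_2^2$ directly via the identity $q_t\widetilde\psi_t=iR_{\varphi_t}(\partial_t+i\nu)\widetilde\psi_t$ and repeated integration by parts (following \cite{LRSS}), you organize the argument as a standard adiabatic expansion $\widetilde\psi_t=a_t\psi_{\varphi_t}+\eta_t^{(1)}+r_t$. Your identification of the role of $\nu$ (cancelling the secular contribution $\langle\psi_{\varphi_t},V_{\Im\varphi_t}R_{\varphi_t}\eta_t^{(1)}\rangle$ in the $a_t$ equation) is correct and arguably more transparent than the way the same cancellation appears in the paper's term \eqref{eq:deriv_R^2}. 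Your derivation of $\|\Im\varphi_t\|_2^2\le\eps$ from $\Ffun(\varphi)=\Ffun(\Re\varphi)+\|\Im\varphi\|_2^2$ is also cleaner than the route via Lemma~\ref{lemma:FQuadBounds}.

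There is, however, a real gap in the Gr\"onwall step. A first-order expansion plus plain Duhamel does \emph{not} yield the inhomogeneity $C\eps\alpha^{-4}(1+\alpha^{-2}|t|)$ you state. The forcing in the $r_t$ equation contains $-i\partial_t\eta_t^{(1)}$, and its dominant piece comes from $\partial_t V_{\Im\varphi_t}=-\alpha^{-2}V_{\Re\varphi_t+\sigma_{\psi_t}}$. Since only $\|\Re\varphi_t+\sigma_{\psi_t}\|_2\le\sqrt{\eps}$ (not $\eps$) is available, this piece has size $\alpha^{-4}\sqrt{\eps}$, and after $2ab\le a^2+b^2$ produces $C\alpha^{-4}\eps|t|$ in the integral inequality --- a factor $\alpha^{2}$ worse than needed, giving $(1+|t|)$ rather than $(1+\alpha^{-2}|t|)$ in the final bound.

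The paper resolves exactly this obstruction in two moves that your sketch does not contain. First, it splits $\Re\varphi_s+\sigma_{\psi_s}=(\sigma_{\psi_s}-\sigma_{\psi_{\varphi_s}})+(\Re\varphi_s+\sigma_{\psi_{\varphi_s}})$; the first summand has size $\lesssim\|\widetilde\psi_s-\psi_{\varphi_s}\|$ and is what actually generates the $C\alpha^{-4}\!\int\|\cdot\|_2^2$ Gr\"onwall kernel (this nonlinear feedback through $\sigma_{\psi_s}$ is absent from your outline). Second, for the remaining piece $\Re\varphi_s+\sigma_{\psi_{\varphi_s}}$ (size $\sqrt{\eps}$), it performs a \emph{second} integration by parts in time via \eqref{eq:PI}, gaining an additional factor $\alpha^{-2}$ and producing the correct $C\alpha^{-6}\eps|t|$ term. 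In your language this amounts to adding a second-order corrector $\eta_t^{(2)}\sim \alpha^{-4}R_{\varphi_t}^3V_{\Re\varphi_t+\sigma_{\psi_{\varphi_t}}}\psi_{\varphi_t}$ before running Duhamel. So the ``main technical obstacle'' you flag is more serious than a matter of tracking the $\eps$ prefactor: without the extra integration by parts (equivalently, without going to second order), the claimed $\alpha^{-2}|t|$ scaling cannot be reached.
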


Our proof in Section \ref{section3} shows  that the smallness condition on $\varepsilon$ in Corollary \ref{cor:adiab} can be made explicit in terms of properties of $\varphi_{\mathrm{P}}$. It also shows that $\min_{\theta \in [0,2\pi)} \|  e^{i\theta} {\psi}_t - \psi_{\varphi_t} \|_2^2 \leq C \eps$ for all times $t$, independently of $\alpha$. The bound \eqref{eq:adiab_comb} improves upon this for large $\alpha$ as long as $\alpha^{-4} |t| e^{C \alpha^{-4} |t|} \ll \alpha^2$ and hence, in particular, for $|t| \lesssim \alpha^4$.  

\paragraph{Effective dynamics for the Fr\"ohlich Hamiltonian.} As already mentioned, the Landau--Pekar equations provide an effective description of the dynamics for a strongly coupled polaron. Its true dynamics is described by the Fr\"ohlich Hamiltonian \cite{frohlich} $H_\alpha$ acting on $L^2( \mathbb{R}^3) \otimes \mathcal{F}$, the tensor product of the Hilbert space $L^2( \mathbb{R}^3)$ for the electron and the bosonic Fock space $\mathcal{F}$ for the phonons. We refer to \cite{LRSS,LMRSS} for a detailed definition. Pekar product states of the form $\psi_t \otimes W( \alpha^2 \varphi_t ) \Omega$, with $(\psi_t,\varphi_t)$  a solution of the Landau--Pekar equations, $W$ the Weyl operator and $\Omega$ the Fock space vacuum,  were proven in \cite[Thm.~II.2]{LRSS} to approximate the dynamics defined by the Fr\"ohlich Hamiltonian $H_\alpha$ for times $|t| \ll \alpha^2$. Recently, it was shown in \cite{LMRSS}  that in order to obtain a norm approximation valid for times of order $\alpha^2$, one needs to implement correlations among phonons, which are captured by a suitable Bogoliubov dynamics  acting on the Fock space of the phonons only. In fact, considering initial data satisfying \eqref{eq:ass_old}, \cite[Theorem I.3]{LMRSS} proves that there exist constants $C,T>0$ (depending on $\varphi_0$) such that 
\begin{equation}
\label{eq:norm_approx1}
\| e^{-iH_\alpha t} \psi_{\varphi_0} \otimes W( \alpha^2 \varphi_0) \Omega -e^{-i \int_0^t ds \, \omega (s)} \psi_t \otimes  W( \alpha^2 \varphi_t) \Upsilon_t  \|_{L^2( \mathbb{R}^3) \otimes \mathcal{F}} \leq C \alpha^{-1} \quad \mathrm{for \, all } \quad |t| \leq T \alpha^2 ,
 \end{equation}
 where $\omega (s) = \alpha^2 \Im \langle \varphi_s , \partial_s \varphi_s\rangle + \| \varphi_s\|_2^2$ and $\Upsilon_t$ is the solution of the dynamics of a suitable  Bogoliubov Hamiltonian on $\mathcal{F}$   (see \cite[Definition I.2]{LMRSS} for a precise definition).  As for the adiabatic theorem discussed above, the restriction to  times $|t| \leq T \alpha^2$ results from the need of a spectral gap of $h_{\varphi_t}$ of order one (compare with \cite[Remark I.4]{LMRSS}), which under the sole assumption \eqref{eq:ass_old} is guaranteed by  \cite[Lemma II.1]{LRSS} only  for  $|t| \leq T\alpha^2$. Theorem \ref{thm:specgap} now provides a class of initial data for which the above norm approximation holds true for all times of order $\alpha^2$, in the following sense.
 
\begin{corollary}
\label{cor:normapprox}
Let $\varphi_0\in L^2(\Rthree)$ be such that
\begin{equation}
\label{eq:ass2}
\Ffun(\varphi_0) \leq e_{\mathrm{P}} + \varepsilon
\end{equation}
for  sufficiently small $\varepsilon>0$.  Then $h_{\varphi_0}$ has a ground state $\psi_{\varphi_0}$. Let $( \psi_t, \varphi_t)$ be the solution to the Landau--Pekar equations \eqref{eq:LP} with initial data $( \psi_{\varphi_0}, \varphi_0)$. Then  there exists a $C>0$ (independent of $\varphi_0$ and $\alpha$) such that 
\begin{equation}
\label{eq:norm_approx2}
\| e^{-iH_\alpha t} \psi_{\varphi_0} \otimes W( \alpha^2 \varphi_0) \Omega -e^{-i \int_0^t ds \, \omega (s)} \psi_t \otimes  W( \alpha^2 \varphi_t) \Upsilon_t   \|_{L^2( \mathbb{R}^3) \otimes \mathcal{F}} \leq C \alpha^{-1} e^{ C \alpha^{-2} |t| } \,.
 \end{equation}
 \end{corollary}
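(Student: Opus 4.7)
The plan is to reduce to the norm-approximation of \cite[Thm.~I.3]{LMRSS} by using Theorem~\ref{thm:specgap} to remove the time restriction on the spectral gap, and then to upgrade the resulting local-in-time estimate to all of $\mathbb{R}$ by a Gronwall argument. First, I verify that the initial data $(\psi_{\varphi_0}, \varphi_0)$ lies in the regime of Theorem~\ref{thm:specgap}: since $\psi_{\varphi_0}$ is the normalized ground state of $h_{\varphi_0}$, one has
\[ \Gfun(\psi_{\varphi_0}, \varphi_0) = \langle \psi_{\varphi_0}, h_{\varphi_0} \psi_{\varphi_0}\rangle + \|\varphi_0\|_2^2 = e(\varphi_0) + \|\varphi_0\|_2^2 = \Ffun(\varphi_0) \leq e_{\mathrm{P}} + \varepsilon. \]
Fixing any $\Lambda \in (0, \Lambda(\varphi_{\mathrm{P}}))$ and requiring $\varepsilon \leq \varepsilon_\Lambda$ from Theorem~\ref{thm:specgap}, I obtain $\Lambda(\varphi_t) \geq \Lambda$ uniformly in $t \in \mathbb{R}$ and $\alpha>0$. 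Existence of $\psi_{\varphi_0}$ itself, for $\varepsilon$ small, follows from $\varphi_0$ being close to a translate of $\varphi_{\mathrm{P}}$, a Pekar-theoretic fact presumably collected in Section~\ref{section2}.

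Next, I revisit the proof of \cite[Thm.~I.3]{LMRSS} with the uniform gap in hand. Inspecting that argument, the restriction $|t| \leq T \alpha^2$ in \eqref{eq:norm_approx1} originates solely from the need to keep $\Lambda(\varphi_t)$ away from zero, which under the weaker assumption \eqref{eq:ass_old} is provided by \cite[Lemma II.1]{LRSS} only in a window of length $\sim \alpha^2$. All other estimates there depend on the initial data only through conserved quantities of the Landau--Pekar evolution (such as $\Gfun(\psi_t,\varphi_t)$ and $\|\psi_t\|_2$) and through $1/\Lambda(\varphi_t)$ via the resolvent $R_{\varphi_t}$. Combining the uniform gap from Step~1 with conservation $\Gfun(\psi_t,\varphi_t)=\Ffun(\varphi_0)\leq e_{\mathrm{P}}+\varepsilon$, these quantities become uniformly bounded in $t$ by constants depending only on $\Lambda$. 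Differentiating the squared norm of
\[ D_t := e^{-iH_\alpha t} \psi_{\varphi_0} \otimes W(\alpha^2 \varphi_0)\Omega - e^{-i\int_0^t \omega(s)\,ds}\,\psi_t \otimes W(\alpha^2 \varphi_t)\Upsilon_t \]
and reorganizing the resulting error terms as in \cite{LMRSS}, I expect a differential inequality of the form
\[ \frac{d}{dt} \|D_t\|^2 \leq C \alpha^{-2}\,\|D_t\|^2 + C \alpha^{-4}, \]
with $C$ depending only on $\Lambda$. Since $D_0 = 0$, Gronwall's lemma then yields $\|D_t\|^2 \leq C\alpha^{-2}(e^{C\alpha^{-2}|t|}-1)$, which is \eqref{eq:norm_approx2}.

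The main obstacle is making the second step rigorous, i.e.\ verifying that every constant appearing in the proof of \cite[Thm.~I.3]{LMRSS} can be tracked to depend only on $\Lambda$ and on the energy bound, rather than on finer $\varphi_0$-dependent norms. In particular, uniform control over $\|\psi_t\|_{H^1}$, $\|\varphi_t\|_2$, and suitable a~priori bounds on the Bogoliubov state $\Upsilon_t$ (e.g.\ on the expected number of phonons and on the Bogoliubov covariance) must be obtained globally in time from the conservation laws of the coupled Landau--Pekar/Bogoliubov dynamics together with the uniform gap provided by Theorem~\ref{thm:specgap}. Once these uniform a~priori bounds are in place, the derivation of the differential inequality and the Gronwall step are essentially automatic, and the universality of $C$ (independence of $\varphi_0$) follows.
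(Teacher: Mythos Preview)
Your proposal is correct and follows exactly the route the paper takes: the paper's entire proof of this corollary is the one-line remark that it ``is an immediate consequence of Theorem~\ref{thm:specgap} and the method of proof in \cite{LMRSS}, as explained in \cite[Remark~I.4]{LMRSS}.'' Your observation that $\Gfun(\psi_{\varphi_0},\varphi_0)=\Ffun(\varphi_0)\leq e_{\mathrm{P}}+\varepsilon$ puts the initial data in the scope of Theorem~\ref{thm:specgap}, and your identification of the time restriction in \cite{LMRSS} as stemming solely from the spectral-gap control (which Theorem~\ref{thm:specgap} now supplies globally), are precisely the content of that remark; the paper simply defers the verification of the uniform constants to \cite[Remark~I.4]{LMRSS} rather than spelling out the Gronwall step as you do.
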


Again, the smallness condition on $\varepsilon$ in Corollary \ref{cor:normapprox} can be made explicit in terms of properties  of $\varphi_{\mathrm{P}}$. 
Corollary \ref{cor:normapprox} is an immediate consequence of Theorem~\ref{thm:specgap} and the method of proof in \cite{LMRSS}, as explained in \cite[Remark I.4]{LMRSS}.

\section{Properties of the Spectral Gap and the Pekar Functionals}
\label{section2}

Throughout the paper, we use the symbol $C$ for generic constants, and their value might change from one occurrence to the next. 

\subsection{Preliminary Lemmas}

We begin by stating some preliminary Lemmas we shall need throughout the following discussion. 

\begin{lemma}[Lemma~2.1 in \cite{FG}] 
	\label{lemma:wellposedness}
	For any $( \psi_0, \varphi_0 ) \in H^1( \mathbb{R}^3) \times L^2( \mathbb{R}^3)$, there is a unique global solution $( \psi_t, \varphi_t) $ of the Landau--Pekar equations \eqref{eq:LP}. Moreover, $\| \psi_0 \|_2 = \| \psi_t \|_2$, $\mathcal{G} ( \psi_0, \varphi_0) = \mathcal{G} ( \psi_t, \varphi_t )$ for all $t \in \mathbb{R}$ and there exists a constant $C>0$ such that 
	\begin{equation}
	\| \psi_t \|_{H^1( \mathbb{R}^3)} \leq C, \quad \| \varphi_t \|_2 \leq C 
	\end{equation}
	for all $\alpha >0$ and all $t \in \mathbb{R}$. 
\end{lemma}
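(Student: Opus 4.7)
The plan is threefold: first establish local well-posedness by a Duhamel/contraction argument, then verify conservation of $\|\psi_t\|_2$ and $\mathcal{G}(\psi_t,\varphi_t)$ by direct differentiation, and finally extend the solution to all times via the blow-up alternative combined with coercivity of the energy.

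For local well-posedness I would exploit that the field equation is linear in $\varphi$ once $\psi$ is fixed, yielding the explicit Duhamel representation
\begin{equation*}
\varphi_t = e^{-it/\alpha^2}\varphi_0 - \frac{i}{\alpha^2}\int_0^t e^{-i(t-s)/\alpha^2}\sigma_{\psi_s}\, ds,
\end{equation*}
while the wavefunction equation is a Schr\"odinger equation with the time-dependent potential $V_{\varphi_t}$. The two basic estimates needed are $\|V_\varphi\|_6 \leq C\|\varphi\|_2$ and $\|\sigma_\psi\|_2 \leq C\|\psi\|_2^{3/2}\|\nabla\psi\|_2^{1/2}$, both of which follow from Hardy--Littlewood--Sobolev (since $(-\Delta)^{-1/2}$ has kernel $\sim|x|^{-2}$ in three dimensions) combined with Sobolev/Gagliardo--Nirenberg interpolation. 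The first of these, together with $H^1(\R^3)\hookrightarrow L^3(\R^3)$, gives $\|V_\varphi\psi\|_2 \leq C\|\varphi\|_2\|\psi\|_{H^1}$. These and the associated Lipschitz bounds drive a contraction mapping in $C([0,T]; H^1(\R^3))\times C([0,T]; L^2(\R^3))$ for $T$ small depending only on $\|\psi_0\|_{H^1}$ and $\|\varphi_0\|_2$, producing a unique local solution.

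Conservation of $\|\psi_t\|_2$ is immediate from self-adjointness of $h_{\varphi_t}$. For the energy I would rewrite $\langle\psi,V_\varphi\psi\rangle = 2\Re\langle\varphi,\sigma_\psi\rangle$ and differentiate: the variation in $\psi$ vanishes by the standard Schr\"odinger identity, while the variation in $\varphi$ collapses to $2\Re\langle\partial_t\varphi_t, \varphi_t+\sigma_{\psi_t}\rangle$. Substituting $\varphi_t+\sigma_{\psi_t}=i\alpha^2\partial_t\varphi_t$ from the field equation turns this into $2\alpha^2\Re(i\|\partial_t\varphi_t\|_2^2)=0$. For a priori control I would then use the identity $\mathcal{G}(\psi,\varphi) = \|\nabla\psi\|_2^2 + \|\varphi+\sigma_\psi\|_2^2 - \|\sigma_\psi\|_2^2$ together with $\|\sigma_\psi\|_2^2 \leq C\|\psi\|_2^3\|\nabla\psi\|_2$ and Young's inequality to obtain
\begin{equation*}
\tfrac12\|\nabla\psi\|_2^2 + \|\varphi+\sigma_\psi\|_2^2 \leq \mathcal{G}(\psi,\varphi) + C\|\psi\|_2^6,
\end{equation*}
so conservation of $\|\psi_t\|_2$ and $\mathcal{G}(\psi_t,\varphi_t)$ controls $\|\psi_t\|_{H^1}$ and $\|\varphi_t\|_2$ uniformly in $t$, and the blow-up alternative then promotes local to global existence.

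The main technical obstacle is the nonlocal, mildly singular character of $V_\varphi$ and $\sigma_\psi$: every estimate must carefully balance the $L^2$-regularity of $\varphi$ against the $H^1$-regularity of $\psi$ through Hardy--Littlewood--Sobolev, and rigorously justifying the differentiation of $\mathcal{G}$ along the flow requires a density argument (e.g.\ approximating the initial data in $H^2\times H^1$ and passing to the limit).
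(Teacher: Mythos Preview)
The paper does not supply its own proof of this lemma: it is quoted verbatim as Lemma~2.1 of \cite{FG} and simply invoked. Your sketch is a correct outline of the standard argument (local well-posedness by contraction using the bounds $\|V_\varphi\psi\|_2\lesssim\|\varphi\|_2\|\psi\|_{H^1}$ and $\|\sigma_\psi\|_2\lesssim\|\psi\|_2^{3/2}\|\nabla\psi\|_2^{1/2}$, conservation of mass and of $\mathcal{G}$ by direct differentiation, and global continuation via the coercivity identity $\mathcal{G}(\psi,\varphi)=\|\nabla\psi\|_2^2+\|\varphi+\sigma_\psi\|_2^2-\|\sigma_\psi\|_2^2$), and is in line with the proof in \cite{FG}; there is nothing further to compare against in the present paper.
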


The following Lemma collects some properties of $V_\varphi$ and $\sigma_\psi$ (see also \cite[Lemma III.2]{LRSS} and \cite[Lemma II.2]{LMRSS}).

\begin{lemma} 
	\label{lemma:potential}
	There exists $C>0$ such that for every $\varphi \in L^2( \mathbb{R}^3)$ and $\psi \in H^1(\mathbb{R}^3)$
	\begin{equation}
	\| V_{\varphi} \|_6 \leq C \| \varphi \|_2 \quad \mathit{and} \quad \| V_\varphi \psi \|_2  \leq C \| \varphi \|_2 \| \psi \|_{H^1( \mathbb{R}^3)} .
	\end{equation}
	Moreover, there exists $C>0$ such that for all $\psi_1, \psi_2 \in H^1(\mathbb{R}^3)$
	\begin{equation}
	\label{eq:sigmaH1bounds}
	\| \sigma_{\psi_1} - \sigma_{\psi_2} \|_2 \leq C \left( \| \psi_1 \|_{2} + \| \psi_2 \|_{2} \right) \min_{\theta\in[0,2\pi)}\|e^{i \theta} \psi_1 - \psi_2 \|_{H^1( \mathbb{R}^3)}.
	\end{equation} 
\end{lemma}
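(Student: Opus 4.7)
The proof is mainly a matter of invoking Hardy--Littlewood--Sobolev and standard Sobolev embeddings, and the plan is to handle the three estimates in the order stated.

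For the first estimate, since $V_\varphi(x) = 2(2\pi)^{3/2} \Re\,[(-\Delta)^{-1/2}\varphi](x)$, I would bound $|V_\varphi| \leq 2(2\pi)^{3/2}\,|(-\Delta)^{-1/2}\varphi|$ and then use the Hardy--Littlewood--Sobolev inequality (equivalently, the Sobolev embedding $\dot H^1(\mathbb{R}^3)\hookrightarrow L^6(\mathbb{R}^3)$) in the form
\begin{equation*}
\|(-\Delta)^{-1/2}\varphi\|_6 \leq C\,\|\varphi\|_2,
\end{equation*}
which yields $\|V_\varphi\|_6 \leq C\|\varphi\|_2$. The second estimate then follows from H\"older with the exponent identity $\tfrac{1}{6}+\tfrac{1}{3}=\tfrac{1}{2}$: $\|V_\varphi \psi\|_2 \leq \|V_\varphi\|_6\|\psi\|_3 \leq C\|\varphi\|_2\|\psi\|_3$, and a Sobolev embedding (or simple interpolation between $L^2$ and $L^6$) gives $\|\psi\|_3 \leq C\|\psi\|_{H^1(\mathbb{R}^3)}$.

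For the third estimate, I would use the dual form of Hardy--Littlewood--Sobolev,
\begin{equation*}
\|(-\Delta)^{-1/2} f\|_2 \leq C\,\|f\|_{6/5},
\end{equation*}
which is the $d=3$, $\alpha=1$ Riesz-potential bound with $\tfrac{1}{2}=\tfrac{1}{6/5}-\tfrac{1}{3}$. Since $|\psi_1|^2=|e^{i\theta}\psi_1|^2$ for every $\theta\in[0,2\pi)$, I am free to replace $\psi_1$ by $e^{i\theta}\psi_1$ inside the density; setting $g_\theta = e^{i\theta}\psi_1-\psi_2$ I would write
\begin{equation*}
|\psi_1|^2-|\psi_2|^2 = |e^{i\theta}\psi_1|^2-|\psi_2|^2 = \overline{e^{i\theta}\psi_1}\,g_\theta + \bar g_\theta\,\psi_2,
\end{equation*}
and then apply H\"older's inequality with $\tfrac{1}{2}+\tfrac{1}{3}=\tfrac{5}{6}$ to each term to get
\begin{equation*}
\bigl\||\psi_1|^2-|\psi_2|^2\bigr\|_{6/5} \leq \bigl(\|\psi_1\|_2+\|\psi_2\|_2\bigr)\,\|g_\theta\|_3.
\end{equation*}
The Sobolev embedding $H^1(\mathbb{R}^3)\hookrightarrow L^3(\mathbb{R}^3)$ upgrades $\|g_\theta\|_3$ to $C\|g_\theta\|_{H^1(\mathbb{R}^3)}$. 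Combining the two HLS ingredients yields
\begin{equation*}
\|\sigma_{\psi_1}-\sigma_{\psi_2}\|_2 \leq C\bigl(\|\psi_1\|_2+\|\psi_2\|_2\bigr)\,\|e^{i\theta}\psi_1-\psi_2\|_{H^1(\mathbb{R}^3)},
\end{equation*}
and the claimed bound follows by taking the infimum over $\theta$.

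There is no real obstacle; everything reduces to a correct bookkeeping of exponents in H\"older and HLS. The only point that requires a touch of care is the third estimate, where one must exploit the phase invariance of $|\psi_1|^2$ before splitting the difference of densities, so that the optimized $\theta$ appears in the $H^1$ norm on the right-hand side rather than just the natural difference $\psi_1-\psi_2$.
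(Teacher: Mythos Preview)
Your proposal is correct and follows essentially the same approach as the paper: for the first two bounds the paper simply cites earlier references, and for the third it exploits the phase invariance of $\sigma_\psi$, splits $|\psi_1|^2-|\psi_2|^2$ into two bilinear terms, and then combines Hardy--Littlewood--Sobolev with H\"older (exponents $2$ and $3$) and the Sobolev embedding $H^1\hookrightarrow L^3$, exactly as you do. The only cosmetic difference is that the paper passes through the Fourier representation of $\sigma_\psi$ before applying HLS in the convolution form, whereas you apply the Riesz-potential bound $\|(-\Delta)^{-1/2}f\|_2\le C\|f\|_{6/5}$ directly in position space.
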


\begin{proof}
	 The first two inequalities follow immediately from \cite[Lemma III.2]{LRSS} and \cite[Lemma II.2]{LMRSS}. For the last inequality, we note that $\sigma_{\psi} = \sigma_{e^{i \theta} \psi}$ for arbitrary $\theta \in \mathbb{R}$. Hence, it is enough to prove the result for $\theta =0$. We write the difference
	\begin{align}
	\widehat{\sigma}_{\psi_1} (k) - \widehat{\sigma}_{\psi_2}  (k) & = |k|^{-1} \left( \langle \psi_1, e^{-ik \,\cdot \,} \psi_1 \rangle -  \langle \psi_2, e^{-ik \, \cdot \,} \psi_2 \rangle\right) \notag \\
	&= |k|^{-1} \left( \langle \psi_1 - \psi_2, e^{-ik \, \cdot\,} \psi_1 \rangle +  \langle \psi_2, e^{-ik \, \cdot\, }\left( \psi_1 - \psi_2\right) \rangle\right) .
	\end{align}
	where $\widehat{\sigma} _\psi (k) = (2 \pi )^{-3/2} \int dx \; e^{-ik \cdot x} \sigma_\psi (x) $ denotes the Fourier transform of $\sigma_\psi$. Thus, 
	\begin{equation}
	\label{eq:diff_sigma_1}
	\| \sigma_{\psi_1} - \sigma_{\psi_2} \|_2^2 \leq 2 \int dk \frac{1}{|k|^2} \left( \vert \langle \psi_1 - \psi_2, e^{-ik\, \cdot \, } \psi_1 \rangle \vert^2 + \vert \langle \psi_2, e^{-ik \,\cdot\, }\left( \psi_1 - \psi_2\right) \rangle\vert^2 \right) . 
	\end{equation}
	For the first term, we write 
	\begin{equation}
	\int \frac{dk}{|k|^2} \vert \langle \psi_1 - \psi_2, e^{-ik \,\cdot\,} \psi_1 \rangle \vert^2 = C \int \frac{dx\,dy}{|x-y|} ( \psi_1 - \psi_2) (x) \overline{(\psi_1 - \psi_2 ) (y)} \, \overline{\psi_1(x)} \psi_1 (y) .
	\end{equation}
	The Hardy-Littlewood-Sobolev inequality implies that
	\begin{equation}
	\int \frac{dk}{|k|^2} \vert \langle \psi_1 - \psi_2, e^{-ik \,\cdot\,} \psi_1 \rangle \vert^2 \leq C \| \psi_1 \overline{( \psi_1 - \psi_2)} \|_{6/5}^2 \leq C \| \psi_1 - \psi_2 \|_3^2 \| \psi_1 \|_2^2,
	\end{equation}
	and we obtain with the Sobolev inequality that
	\begin{equation}
	\int \frac{dk}{|k|^2} \vert \langle \psi_1 - \psi_2, e^{-ik\, \cdot\, } \psi_1 \rangle \vert^2  \leq C \| \psi_1 - \psi_2 \|_{H^1( \mathbb{R}^3)}^2 \| \psi_1 \|_2^2.
	\end{equation}
	The second term of \eqref{eq:diff_sigma_1} can be bounded in a similar way, and we obtain the desired estimate. 
\end{proof}

We recall the definition of the resolvent
\begin{equation}
R_\varphi = q_{\psi_{\varphi}} \left( h_{\varphi} - e( \varphi ) \right)^{-1} q_{\psi_{\varphi}},
\end{equation}
where $q_{\psi_\varphi} = 1- \vert \psi_{\varphi} \rangle \langle \psi_\varphi \vert$. In the following Lemma we collect useful estimates on $R_\varphi$. 

\begin{lemma}
	\label{lemma:resolvent}
	There exists $C>0$ such that
	\begin{equation}
	\| R_{\varphi} \| = \Lambda ( \varphi )^{-1}, \quad \| \left( - \Delta + 1 \right)^{1/2} R_{\varphi}^{1/2} \| \leq C  ( 1 + \| \varphi \|_2 \| R_{\varphi}^{1/2} \| ) 
	\end{equation}
	for any $\varphi \in L^2( \mathbb{R}^3)$ with $e( \varphi ) < 0$.  
\end{lemma}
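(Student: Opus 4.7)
I would treat the two estimates separately.

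\medskip

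The equality $\|R_\varphi\| = \Lambda(\varphi)^{-1}$ is essentially the spectral theorem. Since $\psi_\varphi$ is a (non-degenerate) eigenvector of $h_\varphi$, the projection $q_{\psi_\varphi}$ commutes with $h_\varphi$, so $h_\varphi - e(\varphi)$ leaves $\mathrm{Ran}(q_{\psi_\varphi})$ invariant and is there bounded below by $\Lambda(\varphi)$ by the very definition of the gap. Its restriction is therefore invertible with norm $\Lambda(\varphi)^{-1}$, and $R_\varphi$, which equals this restricted inverse on $\mathrm{Ran}(q_{\psi_\varphi})$ and zero on $\mathrm{span}(\psi_\varphi)$, has operator norm exactly $\Lambda(\varphi)^{-1}$.

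\medskip

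For the second bound, I would fix $f \in L^2(\mathbb{R}^3)$ with $\|f\|_2 = 1$, set $g = R_\varphi^{1/2} f$ and $X = \|(-\Delta + 1)^{1/2} g\|_2$, and use the decomposition
\[
-\Delta + 1 = (h_\varphi - e(\varphi)) + (1 + e(\varphi)) - V_\varphi
\]
to expand $X^2 = \langle g,(-\Delta + 1)g\rangle$ into three terms. The key algebraic identity $R_\varphi^{1/2}(h_\varphi - e(\varphi))R_\varphi^{1/2} = q_{\psi_\varphi}$, which follows from $[h_\varphi, q_{\psi_\varphi}] = 0$ via functional calculus on $\mathrm{Ran}(q_{\psi_\varphi})$, gives that the first term contributes $\langle f, q_{\psi_\varphi} f\rangle \leq 1$. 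The second term is at most $\|R_\varphi^{1/2}\|^2$, using $e(\varphi) < 0$ so that $1 + e(\varphi) \leq 1$ (and if $1 + e(\varphi) \leq 0$ the term is non-positive and can simply be dropped). The third is handled by Cauchy--Schwarz combined with the $L^2$--$H^1$ bound from Lemma~\ref{lemma:potential}:
\[
|\langle g, V_\varphi g\rangle| \leq \|g\|_2 \|V_\varphi g\|_2 \leq C\|\varphi\|_2 \|g\|_2 \|g\|_{H^1} \leq C\|\varphi\|_2 \|R_\varphi^{1/2}\|\, X,
\]
where I used $\|g\|_{H^1} = X$ and $\|g\|_2 \leq \|R_\varphi^{1/2}\|$.

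\medskip

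Combining the three contributions yields a quadratic inequality of the form $X^2 \leq C(1 + \|R_\varphi^{1/2}\|^2) + C\|\varphi\|_2 \|R_\varphi^{1/2}\| X$, which I would close via Young's inequality (absorbing the $X$-linear term into $X^2/2$) and then take square roots to recover the stated estimate. The main technical delicacy is verifying the functional-calculus identity $R_\varphi^{1/2}(h_\varphi - e(\varphi))R_\varphi^{1/2} = q_{\psi_\varphi}$, i.e.\ that on $\mathrm{Ran}(q_{\psi_\varphi})$ the operator $R_\varphi^{1/2}$ really coincides with $[(h_\varphi - e(\varphi))|_{\mathrm{Ran}(q_{\psi_\varphi})}]^{-1/2}$; once this is in hand, the remainder is a routine Cauchy--Schwarz--Young argument driven by the Sobolev-type bound of Lemma~\ref{lemma:potential}.
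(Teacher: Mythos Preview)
Your proof is correct and follows essentially the same strategy as the paper: both rely on the functional-calculus identity $R_\varphi^{1/2}(h_\varphi - e(\varphi))R_\varphi^{1/2} = q_{\psi_\varphi}$ together with the potential bound from Lemma~\ref{lemma:potential}, the only difference being that the paper packages the estimate as a form inequality $-\Delta + 1 \leq C(h_\varphi + C\|\varphi\|_2^2)$ before sandwiching by $R_\varphi^{1/2}$, whereas you decompose $-\Delta+1$ explicitly and treat the $V_\varphi$ cross-term by Cauchy--Schwarz. One minor point: your $(1+e(\varphi))\|g\|_2^2$ term contributes a bare $\|R_\varphi^{1/2}\|^2$, so after Young you actually get $X \leq C(1 + \|R_\varphi^{1/2}\| + \|\varphi\|_2\|R_\varphi^{1/2}\|)$ rather than the exact stated bound; this extra term is harmless for every application in the paper (where $\|\varphi_t\|_2$ stays bounded away from zero), and the paper's own form-bound step glosses over the same additive constant.
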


\begin{proof} 
	The first identity for the norm of the  resolvent follows immediately from the definition of the spectral gap $\Lambda ( \varphi )$ in \eqref{def:gap}. For $\psi \in L^2( \mathbb{R}^3)$ we have  
	\begin{equation}
	\| \left( - \Delta +1 \right)^{1/2}  R_\varphi^{1/2} \psi \|_2^2 = \langle \psi, \, R_\varphi^{1(2} \left( - \Delta +1 \right) R_\varphi^{1/2} \psi \rangle \,.
	\end{equation}
	It follows from Lemma \ref{lemma:potential} that  there exists $C>0$ such that 
	\begin{align}
	\| \left( - \Delta +1 \right)^{1/2} R_\varphi^{1/2} \psi \|_2^2  & \leq C \,  \langle \psi, R_{\varphi}^{1/2} \left( h_{\varphi} + C \| \varphi \|_2^2 \right) R_\varphi^{1/2} \psi \rangle \notag \\
	& = C\, \|q_{\psi_{\varphi}} \psi \|_2^2 + C \left(C \|  \varphi \|_2^2  + e( \varphi ) \right) \| R_\varphi^{1/2} \psi \|_2^2.
	\end{align}
	Since $e( \varphi ) <0$ this implies  
	the desired estimate. 
\end{proof}

\subsection{Perturbative properties of ground states and of the spectral gap }

Since the essential spectrum  of $h_{\varphi}$ is $\mathbb{R}_+$,  the assumption $e(\varphi)<0$ guarantees 
the existence of a ground state (denoted by $\psi_{\varphi}$) and of a spectral gap  $\Lambda(\varphi)>0$ of $h_{\varphi}$.
In the next two Lemmas we investigate the behavior of $\Lambda(\varphi)$ and $\psi_{\varphi}$ under $L^2$-perturbations of $\varphi$.

\begin{lemma}
	\label{lemma:specgap_pertub}
	Let $\varphi_0$  satisfy \eqref{eq:ass_old}, and  let $0 < \Lambda < \Lambda ( \varphi_0 )$. Then, there exists $\delta_\Lambda >0$ (depending, besides $\Lambda$, only on the spectrum of $h_{\varphi_0}$ and $\|\varphi_0\|_2$) such that 
	\begin{equation}
	\Lambda (\varphi ) \geq \Lambda \quad \mathrm{for \, all} \quad \varphi \in L^2( \mathbb{R}^3) \quad \mathrm{with} \quad \| \varphi - \varphi_0 \|_2 \leq \delta_\Lambda .
	\end{equation}
\end{lemma}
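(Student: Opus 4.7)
\noindent\emph{Proof plan.} The plan is to apply analytic perturbation theory to $h_\varphi = h_{\varphi_0} + V_{\varphi - \varphi_0}$. Lemma~\ref{lemma:potential} gives the bound $\|V_{\varphi-\varphi_0}\psi\|_2 \leq C \|\varphi-\varphi_0\|_2 \|\psi\|_{H^1(\R^3)}$, identifying $V_{\varphi-\varphi_0}$ as a small perturbation of $h_{\varphi_0}$ once $\|\varphi-\varphi_0\|_2$ is small. I would exploit this via the resolvent identity along a closed contour in $\mathbb{C}$ that isolates $e(\varphi_0)$ from the rest of $\sigma(h_{\varphi_0})$.

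Concretely, I would fix $\Lambda' := (\Lambda + \Lambda(\varphi_0))/2 \in (\Lambda, \Lambda(\varphi_0))$ and let $\gamma$ be the circle of radius $\Lambda'$ around $e(\varphi_0)$. By the definition of $\Lambda(\varphi_0)$, $\gamma \subset \rho(h_{\varphi_0})$, the only spectral point of $h_{\varphi_0}$ enclosed by $\gamma$ is $e(\varphi_0)$, and $\dist(\gamma, \sigma(h_{\varphi_0})) \geq (\Lambda(\varphi_0)-\Lambda)/2$; in particular $\|(h_{\varphi_0}-z)^{-1}\|$ is uniformly bounded on $\gamma$. I would then upgrade this to an $L^2 \to H^1$ bound $\|(h_{\varphi_0}-z)^{-1}\|_{L^2\to H^1}\leq M$ for $z\in\gamma$, with $M$ depending only on $\Lambda$, $\sigma(h_{\varphi_0})$ and $\|\varphi_0\|_2$, by writing $-\Delta(h_{\varphi_0}-z)^{-1} = 1 + (z-V_{\varphi_0})(h_{\varphi_0}-z)^{-1}$ and absorbing the $V_{\varphi_0}$-term via a H\"older/Sobolev/Young estimate of the type $|\langle\chi, V_{\varphi_0}\chi\rangle|\leq \tfrac{1}{2}\|\nabla\chi\|_2^2 + C\|\varphi_0\|_2^4\|\chi\|_2^2$. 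Combined with Lemma~\ref{lemma:potential} this gives $\|V_{\varphi-\varphi_0}(h_{\varphi_0}-z)^{-1}\|\leq CM\|\varphi-\varphi_0\|_2$ uniformly on $\gamma$.

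Choosing $\delta_\Lambda$ so small that the last quantity is at most $1/2$, a Neumann series yields $\gamma \subset \rho(h_\varphi)$ together with $\|(h_\varphi-z)^{-1} - (h_{\varphi_0}-z)^{-1}\| = O(\|\varphi-\varphi_0\|_2)$ on $\gamma$; integrating around $\gamma$ the Riesz projections $P_\varphi = -\frac{1}{2\pi i}\oint_\gamma(h_\varphi-z)^{-1}\,dz$ then satisfy $\|P_\varphi - P_{\varphi_0}\| = O(\|\varphi-\varphi_0\|_2)$. Shrinking $\delta_\Lambda$ further to enforce $\|P_\varphi - P_{\varphi_0}\| < 1$ gives $\mathrm{rank}(P_\varphi)=\mathrm{rank}(P_{\varphi_0})=1$, so $\sigma(h_\varphi)$ contains exactly one point $\mu_\varphi$ inside $\gamma$, while all other spectral points lie at distance $\geq \Lambda'$ from $e(\varphi_0)$. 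A Rayleigh-quotient argument with trial state $\psi_{\varphi_0}$, combined with a matching lower bound obtained from the fact that any minimizing sequence for $\langle\cdot, h_\varphi\cdot\rangle$ has uniformly bounded $H^1$-norm (by the same H\"older/Sobolev/Young chain applied to $V_\varphi$ and using $\|\varphi\|_2 \leq \|\varphi_0\|_2 + \delta_\Lambda$), yields $|e(\varphi)-e(\varphi_0)|\leq C\|\varphi-\varphi_0\|_2$. Reducing $\delta_\Lambda$ once more so that this is $<\Lambda'-\Lambda$ places $e(\varphi)$ strictly inside $\gamma$, forcing $\mu_\varphi = e(\varphi)$; and for any $\lambda\in\sigma(h_\varphi)\setminus\{e(\varphi)\}$ one obtains $|\lambda-e(\varphi)|\geq \Lambda' - |e(\varphi)-e(\varphi_0)|\geq \Lambda$, as claimed.

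I expect the main technical point to be the uniform $L^2\to H^1$ resolvent bound on $\gamma$, with constants controlled solely by $\sigma(h_{\varphi_0})$ and $\|\varphi_0\|_2$ (so that $\delta_\Lambda$ truly depends only on the permitted data); once that is in place, the remainder is a fairly routine application of analytic perturbation theory combined with elementary variational estimates for $e(\varphi)$.
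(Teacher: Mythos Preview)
Your approach is correct, but it differs substantially from the paper's. The paper avoids resolvents and contour integrals entirely: it works directly with the min-max characterizations of $e(\varphi)$ and of the first excited value $e_1(\varphi)$ (or the bottom of the essential spectrum). Using the form bound $|\langle\psi,V_{\varphi-\varphi_0}\psi\rangle|\leq C\|\varphi-\varphi_0\|_2\|\psi\|_{H^1}^2$ together with $\|\psi\|_{H^1}^2\leq 2\langle\psi,h_{\varphi_0}\psi\rangle + C(\|\varphi_0\|_2^2+1)$, one gets two-sided estimates of the form $\langle\psi,h_\varphi\psi\rangle \gtrless (1\pm C\delta)\langle\psi,h_{\varphi_0}\psi\rangle \pm C\delta(\|\varphi_0\|_2^2+1)$, and feeding these into min-max immediately yields $\Lambda(\varphi)\geq\Lambda(\varphi_0)-C\delta(\|\varphi_0\|_2^2+1)$. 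This is shorter and more elementary than your Riesz-projection argument, requiring only real variational principles rather than complex-analytic perturbation theory. Your route has the advantage of producing, as a byproduct, norm-continuity of the spectral projection $P_\varphi$ and a Lipschitz bound on $e(\varphi)$, which the paper obtains separately in the next lemma; so your argument essentially merges Lemmas~\ref{lemma:specgap_pertub} and~\ref{lemma:gs_pertub} into one step, at the cost of heavier machinery. Both are valid and the constants in both depend only on the permitted data.
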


\begin{proof}
	By definition of the spectral gap
	\begin{equation}
	\Lambda ( \varphi ) = e_1 ( \varphi ) - e( \varphi ),
	\end{equation}
	where $e( \varphi )$ denotes the ground state energy of $h_\varphi$, and $e_1( \varphi )$ its first excited eigenvalue if it exists, or otherwise $e_1( \varphi) =0$ (which is the bottom of the essential spectrum). By the min-max principle we can write
	\begin{equation}
	e_1( \varphi ) = \inf_{\substack{A \subset L^2( \mathbb{R}^3) \\ \mathrm{dim} A =2}} \sup_{\substack{\psi \in A \\ \| \psi \|_2 =1}} \langle \psi, h_\varphi \psi \rangle .
	\end{equation}
	For $\psi \in H^1( \mathbb{R}^3)$ with $\| \psi \|_2 =1$ we find with Lemma \ref{lemma:potential}
	\begin{align}
	\langle \psi, h_\varphi \psi \rangle & = \langle \psi, h_{\varphi_0} \psi \rangle + \langle \psi, \, V_{\varphi - \varphi_0} \psi \rangle  \notag\\
	&\leq \langle \psi, \, h_{\varphi_0} \psi \rangle + C\| \varphi - \varphi_0 \|_2 \| \psi \|_{H^1( \mathbb{R}^3)}^2.
	\end{align}
	Moreover, for $\varepsilon >0$, 
	\begin{equation}
\| \psi \|_{H^1( \mathbb{R}^3)}^2 =  \langle \psi, \, h_{\varphi_0} \psi \rangle -  \langle \psi, \, V_{\varphi_0} \psi \rangle   + 1 \leq  \langle \psi, \, h_{\varphi_0} \psi \rangle + \varepsilon\| \psi \|_{H^1( \mathbb{R}^3)}^2 + C \varepsilon^{-1} \| \varphi_0 \|_2^2+1.
	\end{equation}
	Hence, choosing $\varepsilon = 1/2$, we find 
	\begin{equation}
	\| \psi \|_{H^1( \mathbb{R}^3)}^2 \leq 2 \langle \psi, \, h_{\varphi_0} \psi \rangle + C(  \| \varphi_0 \|_2^2 + 1 ) .
	\end{equation}
	Thus, if $\|\varphi-\varphi_0\|_2\leq \delta$, we have
	\begin{equation}
	\langle \psi, \, h_{\varphi} \psi \rangle \leq ( 1 + C \delta ) \langle \psi, \, h_{\varphi_0} \psi \rangle + C \delta  (\| \varphi_0 \|_2^2+1)  ,
	\end{equation}
	and similarly 
	\begin{equation}
	\langle \psi, \, h_{\varphi} \psi \rangle \geq ( 1 - C\delta ) \langle \psi, \, h_{\varphi_0} \psi \rangle - C \delta( \| \varphi_0 \|_2^2+1).
	\end{equation}
	Since $e( \varphi_0 ) ,  e( \varphi_1) \leq 0$, we therefore find 
	\begin{equation}
	\Lambda ( \varphi ) \geq \Lambda ( \varphi_0) -C\delta \left( e( \varphi_0) + e_1( \varphi_0) + 2 (\| \varphi_0 \|_2^2+1) \right) \geq \Lambda ( \varphi_0 ) - 2C \delta( \| \varphi_0 \|_2^2+1) > \Lambda
	\end{equation}
	for sufficiently small $\delta = \delta_\Lambda >0$. 
\end{proof}

\begin{lemma}
\label{lemma:gs_pertub}
Let $\varphi_0$  satisfy \eqref{eq:ass_old}, and  let $\varphi \in L^2( \mathbb{R}^3)$ with 
\begin{equation}
\label{ass:gs_perturb}
 \| \varphi - \varphi_0 \| \leq \delta_{\varphi_0}
\end{equation}
for sufficiently small $\delta_{\varphi_0} >0$. Then, there exists a unique positive and normalized ground state $\psi_{\varphi}$ of $h_\varphi$. Moreover, there exists $C>0$ (independent of $\varphi$) such that 
\begin{equation}
\label{eq:lemma_gs_perturb}
\| \psi_{\varphi_0} - \psi_{\varphi} \|_{H^1 ( \mathbb{R}^3)} \leq C  
\| \varphi - \varphi_0 \|_2 .
\end{equation}
\end{lemma}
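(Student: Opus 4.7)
The plan is to use standard Rayleigh--Schr\"odinger type perturbation theory for the ground state, combined with the spectral gap persistence of Lemma \ref{lemma:specgap_pertub} and the resolvent bound in Lemma \ref{lemma:resolvent}.

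First I would verify that a ground state exists. Since $e(\varphi_0)<0$ and $\langle \psi_{\varphi_0},V_{\varphi-\varphi_0}\psi_{\varphi_0}\rangle$ is $O(\|\varphi-\varphi_0\|_2)$ by Lemma \ref{lemma:potential}, the variational principle gives $e(\varphi)\leq e(\varphi_0)+C\|\varphi-\varphi_0\|_2<0$ for $\delta_{\varphi_0}$ small enough; as the essential spectrum of $h_\varphi$ is $[0,\infty)$, $h_\varphi$ then has a ground state below zero, and Lemma \ref{lemma:specgap_pertub} ensures it is isolated. Uniqueness and positivity (after a phase choice) follow since $V_\varphi$ is real and $h_\varphi$ is a Schr\"odinger operator of standard Perron--Frobenius type. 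By the same sort of variational comparison applied with $\psi_\varphi$ as trial state, $|e(\varphi)-e(\varphi_0)|\leq C\|\varphi-\varphi_0\|_2$. Next, using the bound from the proof of Lemma \ref{lemma:specgap_pertub},
\begin{equation*}
\|\psi_\varphi\|_{H^1}^2 \leq 2\langle\psi_\varphi,h_{\varphi_0}\psi_\varphi\rangle + C(\|\varphi_0\|_2^2+1) \leq 2e(\varphi) + C\|\varphi-\varphi_0\|_2\|\psi_\varphi\|_{H^1}^2 + C(\|\varphi_0\|_2^2+1),
\end{equation*}
so for $\delta_{\varphi_0}$ small enough, $\|\psi_\varphi\|_{H^1}\leq C$ uniformly.

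Now for the main estimate, I rewrite the eigenvalue equation for $\psi_\varphi$ as
\begin{equation*}
(h_{\varphi_0}-e(\varphi_0))\psi_\varphi = (e(\varphi)-e(\varphi_0))\psi_\varphi - V_{\varphi-\varphi_0}\psi_\varphi.
\end{equation*}
Applying $R_{\varphi_0}^{1/2}$ on the left (using $R_{\varphi_0}(h_{\varphi_0}-e(\varphi_0)) = q_{\psi_{\varphi_0}}$ on the appropriate domain and the identity $R_{\varphi_0}=R_{\varphi_0}^{1/2}R_{\varphi_0}^{1/2}$) gives
\begin{equation*}
q_{\psi_{\varphi_0}}\psi_\varphi = R_{\varphi_0}\bigl[(e(\varphi)-e(\varphi_0))\psi_\varphi - V_{\varphi-\varphi_0}\psi_\varphi\bigr].
\end{equation*}
Applying $(-\Delta+1)^{1/2}$ and using Lemma \ref{lemma:resolvent} together with the $L^2$-bound on the right-hand side (which comes from Lemma \ref{lemma:potential} and the uniform $H^1$-bound on $\psi_\varphi$), I obtain $\|q_{\psi_{\varphi_0}}\psi_\varphi\|_{H^1}\leq C\|\varphi-\varphi_0\|_2$.

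Finally, with $\psi_\varphi$ chosen positive (so that $\langle\psi_{\varphi_0},\psi_\varphi\rangle\geq 0$), normalization gives $\langle\psi_{\varphi_0},\psi_\varphi\rangle = \sqrt{1-\|q_{\psi_{\varphi_0}}\psi_\varphi\|_2^2}$, hence $1-\langle\psi_{\varphi_0},\psi_\varphi\rangle\leq \|q_{\psi_{\varphi_0}}\psi_\varphi\|_2^2$, and the decomposition
\begin{equation*}
\psi_{\varphi_0} - \psi_\varphi = (1-\langle\psi_{\varphi_0},\psi_\varphi\rangle)\psi_{\varphi_0} - q_{\psi_{\varphi_0}}\psi_\varphi
\end{equation*}
together with $\|\psi_{\varphi_0}\|_{H^1}\leq C$ yields \eqref{eq:lemma_gs_perturb}. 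The only mildly delicate point is controlling $\|\psi_\varphi\|_{H^1}$ uniformly (needed for the $L^2$-bound on $V_{\varphi-\varphi_0}\psi_\varphi$); this is handled by closing the inequality above for small enough $\delta_{\varphi_0}$, which is what ultimately fixes the smallness threshold.
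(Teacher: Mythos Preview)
Your argument is correct, but it follows a genuinely different route from the paper. The paper interpolates along the linear path $\varphi_\mu=\varphi_0+\mu(\varphi-\varphi_0)$, invokes Lemma~\ref{lemma:specgap_pertub} to guarantee a gap all along the path, uses the perturbation formula $\partial_\mu\psi_{\varphi_\mu}=R_{\varphi_\mu}V_{\varphi_0-\varphi}\psi_{\varphi_\mu}$, and then integrates, bounding $\|(-\Delta+1)^{1/2}R_{\varphi_\mu}^{1/2}\|$ uniformly in $\mu$ via Lemma~\ref{lemma:resolvent}. This gives the $H^1$-estimate in one stroke, without separating parallel and orthogonal components and without needing to control $e(\varphi)-e(\varphi_0)$ explicitly. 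Your approach instead works directly at the endpoint: from the eigenvalue equation you extract $q_{\psi_{\varphi_0}}\psi_\varphi=R_{\varphi_0}\bigl[(e(\varphi)-e(\varphi_0))\psi_\varphi-V_{\varphi-\varphi_0}\psi_\varphi\bigr]$, bound the orthogonal part in $H^1$ using only the resolvent at $\varphi_0$, and then recover the parallel part from positivity and normalization. This is a bit more hands-on (you need the separate eigenvalue Lipschitz bound and the positivity argument), but it has the advantage of using the resolvent only at the fixed reference point $\varphi_0$ rather than uniformly along a path, and it avoids any appeal to differentiability of $\mu\mapsto\psi_{\varphi_\mu}$. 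One small expository point: where you write ``applying $R_{\varphi_0}^{1/2}$ on the left'' you actually apply $R_{\varphi_0}$; the factorization $R_{\varphi_0}=R_{\varphi_0}^{1/2}R_{\varphi_0}^{1/2}$ only enters in the subsequent $H^1$-bound via $(-\Delta+1)^{1/2}R_{\varphi_0}=\bigl[(-\Delta+1)^{1/2}R_{\varphi_0}^{1/2}\bigr]R_{\varphi_0}^{1/2}$.
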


\begin{proof}
We write
\begin{equation}
\psi_{\varphi} - \psi_{\varphi_0} = \int_0^1 d\mu \, \partial_\mu \psi_{\varphi_\mu},
\end{equation}
with $\varphi_\mu = \varphi_0 + \mu ( \varphi - \varphi_0 )$. Note that $\psi_{\varphi_\mu}$ is well defined for all $\mu\in [0,1]$, since 
\begin{equation}
\| \varphi_\mu - \varphi_0 \|_2 = \mu \| \varphi - \varphi_0 \|_2 \leq  \mu \delta_{\varphi_0} \leq  \delta_{\varphi_0}
\end{equation}
and therefore Lemma \ref{lemma:specgap_pertub} guarantees the existence of a spectral gap 
\begin{equation}
\label{eq:bound_Lambda_mu}
\Lambda ( \varphi_\mu ) \geq \Lambda > 0
\end{equation}
for sufficiently small $\delta_{\varphi_0}$, uniformly in $\mu\in [0,1]$. First order perturbation theory yields 
\begin{equation}
\partial_\mu \psi_{\varphi_\mu} = R_{\varphi_\mu }  V_{\varphi_0 - \varphi} \psi_{\varphi_\mu}
\end{equation} 
and it follows from Lemma \ref{lemma:potential} that 
\begin{align}
\| \psi_{\varphi_0} - \psi_{\varphi} \|_{H^1( \mathbb{R}^3)} & \leq  \int_0^1 d\mu \,  \| R_{\varphi_\mu} V_{\varphi - \varphi_0} \psi_{\varphi_\mu} \|_{H^1( \mathbb{R}^3)} \notag \\
&\leq C \int_0^1 d\mu \,  \| \left( - \Delta +1 \right)^{1/2}R_{\varphi_\mu}^{1/2}\|^2 \, \| \varphi - \varphi_0 \|_2  .
\end{align}
Lemma \ref{lemma:resolvent} shows that
\begin{equation}\label{st}
\| \left( - \Delta +1 \right)^{1/2}R_{\varphi_\mu}\| \leq C \left( 1 + \| \varphi_\mu \|_2 \| R_{\varphi_\mu} \| \right).
\end{equation}
Since $\| \varphi_\mu \|_2 \leq  \| \varphi_0 \|_2 + \mu \| \varphi - \varphi_0 \|_2 \leq \| \varphi_0 \|_2 + \delta_{\varphi_0}$, the bound \eqref{eq:bound_Lambda_mu} implies that the right-hand side of \eqref{st} is bounded independently of $\mu$. 
Hence the desired estimate \eqref{eq:lemma_gs_perturb} follows.
\end{proof}

\subsection{Pekar Functionals}
\label{sec:coerv}

Recall the definition of the Pekar Functionals $\Gfun$, $\Efun$ and $\Ffun$ in \eqref{def:G} and \eqref{def:EF}, and note that 
\begin{equation}
\Gfun(\psi,\varphi)=\Efun(\psi)+\|\varphi+\sigma_{\psi}\|_2^2 \,. 
\end{equation}
As was shown in \cite{L}, $\Efun$ admits a unique \emph{strictly positive and radially symmetric} minimizer, which is smooth and will be denoted by $\psi_{\mathrm{P}}$. Moreover, the set of all minimizers of $\Efun$ coincides with 
\begin{equation}\label{def:Theta}
\Theta(\psi_{\mathrm{P}})=\{e^{i\theta}\psi_{\mathrm{P}}(\, \cdot\, -y) \, | \, \theta\in[0,2\pi),\, y\in \Rthree\}.
\end{equation}
This clearly implies that the set of minimizers of $\Ffun$ coincides with 
\begin{equation}
\Omega(\varphi_{\mathrm{P}})=\{\varphi_{\mathrm{P}}(\, \cdot\, -y) \, | \, y\in \Rthree\} \quad \text{with} \quad \varphi_{\mathrm{P}}= -\sigma_{\psi_{\mathrm{P}}}.
\end{equation}

In the following we prove quadratic lower bounds for the Pekar Functionals $\Efun$ and $\Ffun$. The key ingredients  are the results obtained in \cite{lenzmann2009uniqueness}. In particular, these results allow to infer, using standard arguments, the following Lemma~\ref{lemma:EQuadBounds}, which provides the quadratic lower bounds for $\Efun$.  (We spell out  its proof for completeness in the Appendix; a very similar proof in a slightly different setting is also given in \cite{frank2019quantum}). Based on the bound for $\Efun$, it is then quite straightforward to obtain the  quadratic lower bound for $\Ffun$ in the subsequent Lemma~\ref{lemma:FQuadBounds}. 

\begin{lemma}[Quadratic Bounds for $\Efun$]
	\label{lemma:EQuadBounds}
	There exists a positive constant $\kappa$ such that, for any $L^2$-normalized $\psi \in H^1(\Rthree)$, 
	\begin{equation}\label{2.33}
	\Efun(\psi)-e_{\mathrm{P}}\geq\kappa \min_{y \in \Rthree \atop \theta\in[0,2\pi)} \|\psi-e^{i\theta} \psi_{\mathrm{P}}(\,\cdot\, -y)\|_{H^1(\Rthree)}^2=\dist_{H^1(\Rthree)}^2(\psi, \Theta(\psi_{\mathrm{P}})).
	\end{equation}
\end{lemma}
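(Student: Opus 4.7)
The plan is to establish a local quadratic lower bound in a neighborhood of the minimizer manifold $\Theta(\psi_{\mathrm P})$ and then extend it globally by a contradiction/compactness argument, exploiting the structure $\mathcal{E}(\psi)=\langle\psi,-\Delta\psi\rangle-\|\sigma_\psi\|_2^2$ together with the non-degeneracy (modulo symmetries) of the Hessian of $\mathcal{E}$ at $\psi_{\mathrm P}$ provided by Lenzmann's results.

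First I would set up the linearization. Since $\Theta(\psi_{\mathrm P})$ is invariant under phase and translation symmetries, given $\psi$ with $\|\psi\|_2=1$ I would pick $(\theta_*,y_*)$ realizing the distance and write $\psi=e^{i\theta_*}(\psi_{\mathrm P}(\cdot-y_*)+\eta)$ with $\eta$ orthogonal in $L^2$ (and essentially also in $H^1$) to the tangent space of $\Theta(\psi_{\mathrm P})$ at $\psi_{\mathrm P}(\cdot-y_*)$, namely to $\mathrm{span}\{i\psi_{\mathrm P}(\cdot-y_*),\,\partial_{x_j}\psi_{\mathrm P}(\cdot-y_*):j=1,2,3\}$. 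By translation- and phase-invariance of $\mathcal{E}$ we may take $y_*=0$, $\theta_*=0$. The normalization constraint forces $\Re\langle\psi_{\mathrm P},\eta\rangle=-\tfrac12\|\eta\|_2^2$, so modulo a quadratic error $\eta$ is effectively orthogonal to $\psi_{\mathrm P}$ as well. Expanding $\mathcal{E}(\psi_{\mathrm P}+\eta)$ to second order around $\psi_{\mathrm P}$, the Euler--Lagrange equation $-\Delta\psi_{\mathrm P}+V_{\varphi_{\mathrm P}}\psi_{\mathrm P}=e(\varphi_{\mathrm P})\psi_{\mathrm P}$ kills the linear term, leaving
\begin{equation}
\mathcal{E}(\psi_{\mathrm P}+\eta)-e_{\mathrm P}=\langle\eta_1,L_+\eta_1\rangle+\langle\eta_2,L_-\eta_2\rangle+o(\|\eta\|_{H^1}^2),
\end{equation}
where $\eta=\eta_1+i\eta_2$ with $\eta_j$ real and
\begin{equation}
L_+=h_{\varphi_{\mathrm P}}-e(\varphi_{\mathrm P})-4K,\qquad L_-=h_{\varphi_{\mathrm P}}-e(\varphi_{\mathrm P}),
\end{equation}
with $K$ the rank-free operator coming from the quadratic variation of $\|\sigma_\psi\|_2^2$, whose kernel in $\{\psi_{\mathrm P}\}^\perp$ is spanned by $\{\partial_{x_j}\psi_{\mathrm P}\}$. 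The key input from Lenzmann's non-degeneracy theorem (as used in \cite{lenzmann2009uniqueness,frank2019quantum}) is that $L_+$ is strictly positive on the orthogonal complement of $\mathrm{span}\{\psi_{\mathrm P},\partial_{x_j}\psi_{\mathrm P}\}$ and $L_-$ is strictly positive on $\{\psi_{\mathrm P}\}^\perp$. Equivalence of the $L^2$-spectral gap of $h_{\varphi_{\mathrm P}}$ with an $H^1$ gap (via the standard Agmon-type argument used in Lemma \ref{lemma:specgap_pertub}) upgrades this to a lower bound
\begin{equation}
\mathcal{E}(\psi_{\mathrm P}+\eta)-e_{\mathrm P}\ge 2\kappa_0\|\eta\|_{H^1}^2+o(\|\eta\|_{H^1}^2),
\end{equation}
so the desired estimate with some $\kappa>0$ holds for all $\psi$ with $\dist_{H^1}(\psi,\Theta(\psi_{\mathrm P}))\le\delta$ and some small $\delta>0$.

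To globalize, I would argue by contradiction: if the inequality fails, there exists a sequence $\psi_n$ with $(\mathcal{E}(\psi_n)-e_{\mathrm P})/\dist_{H^1}^2(\psi_n,\Theta(\psi_{\mathrm P}))\to 0$. The local bound already excludes $\dist_{H^1}(\psi_n,\Theta(\psi_{\mathrm P}))\to 0$, so the distance is bounded below. On the other hand, $\mathcal{E}(\psi_n)-e_{\mathrm P}\to 0$ since $\|\psi_n\|_{H^1}$ is uniformly bounded (by coercivity of $\mathcal{E}$). Standard concentration-compactness for the Pekar functional (the minimizing sequences of $\mathcal{E}$ are, up to translations, relatively compact in $H^1$, by Lions) then produces a subsequence and translations $y_n$ with $\psi_n(\cdot+y_n)\to e^{i\theta}\psi_{\mathrm P}$ in $H^1$, contradicting the distance lower bound.

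The main obstacle is to make the local expansion rigorous with a sharp quadratic remainder in $H^1$: the nonlocal term $\|\sigma_\psi\|_2^2$ is not simply bounded by $\|\eta\|_{H^1}^2$ with a small constant, and I would need Hardy--Littlewood--Sobolev and Sobolev estimates (in the spirit of Lemma \ref{lemma:potential}, and in particular the control \eqref{eq:sigmaH1bounds}) to show $\big|\,\|\sigma_{\psi_{\mathrm P}+\eta}\|_2^2-\|\sigma_{\psi_{\mathrm P}}\|_2^2 - (\text{linear + quadratic in }\eta)\,\big|=o(\|\eta\|_{H^1}^2)$. The other delicate point is the passage from the spectral bound on $L_\pm$ in $L^2$ to a genuine $H^1$ bound, which is handled by combining $L_\pm\ge c>0$ on the relevant subspace with the uniform bound $\|(-\Delta+1)^{1/2}R_{\varphi_{\mathrm P}}^{1/2}\|\le C$ from Lemma \ref{lemma:resolvent}.
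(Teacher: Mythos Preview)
Your proposal is correct and follows essentially the same three-step strategy as the paper's proof in the Appendix: reduce by symmetry to a representative closest to $\psi_{\mathrm P}$, establish a local quadratic lower bound via the Hessian analysis based on Lenzmann's non-degeneracy result, and then globalize by contradiction using compactness of minimizing sequences. The only minor technical differences are that the paper minimizes the $L^2$ (rather than $H^1$) distance to fix the projection, upgrades the Hessian bound from $L^2$ to $H^1$ by interpolating $\mathrm{Hess}_{\psi_{\mathrm P}}(\delta)\geq c\|\delta\|_2^2$ with $\mathrm{Hess}_{\psi_{\mathrm P}}(\delta)\geq\|\delta\|_{H^1}^2-C\|\delta\|_2^2$ (instead of invoking the resolvent estimate of Lemma~\ref{lemma:resolvent}), and cites Lieb \cite{L} directly for the $H^1$-convergence of minimizing sequences.
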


\begin{lemma}[Quadratic Bounds for $\Ffun$]
	\label{lemma:FQuadBounds}
	There exists a positive constant $\tau$ such that, for any $\varphi \in L^2(\Rthree)$, 
	\begin{equation}
	\Ffun(\varphi)-e_{\mathrm{P}}\geq \tau \min_{y\in \Rthree} \|\varphi-\varphi_{\mathrm{P}}(\,\cdot\, -y)\|_2^2=\tau \dist_{L^2(\Rthree)}^2(\varphi,\Omega(\varphi_{\mathrm{P}})).
	\end{equation}
\end{lemma}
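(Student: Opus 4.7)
The plan is to derive the quadratic bound for $\Ffun$ directly from the one for $\Efun$ (Lemma~\ref{lemma:EQuadBounds}) by exploiting the identity $\Gfun(\psi,\varphi)=\Efun(\psi)+\|\varphi+\sigma_\psi\|_2^2$ recalled at the start of Section~\ref{sec:coerv}, together with the Lipschitz estimate on $\psi\mapsto\sigma_\psi$ provided by Lemma~\ref{lemma:potential}.

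First, I fix $\varphi\in L^2(\Rthree)$ and let $\psi\in H^1(\Rthree)$ with $\|\psi\|_2=1$ be arbitrary. I pick $\tilde\psi=e^{i\theta_*}\psi_{\mathrm{P}}(\cdot-y_*)\in\Theta(\psi_{\mathrm{P}})$ realizing $\dist_{H^1(\Rthree)}(\psi,\Theta(\psi_{\mathrm{P}}))=\|\psi-\tilde\psi\|_{H^1(\Rthree)}$ (or, if the infimum happens not to be attained, a quasi-minimizer up to an error that is sent to zero in the end; the usual coercivity argument, using that $\psi_{\mathrm{P}}(\cdot-y_n)\rightharpoonup 0$ as $|y_n|\to\infty$, actually guarantees attainment). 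Since $\sigma$ is invariant under a constant phase and commutes with translations, $\sigma_{\tilde\psi}=-\varphi_{\mathrm{P}}(\cdot-y_*)\in -\Omega(\varphi_{\mathrm{P}})$.

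Next, I apply the last part of Lemma~\ref{lemma:potential} with $\psi_1=\psi$, $\psi_2=\tilde\psi$. The minimum over phases there is obviously bounded by $\|\psi-\tilde\psi\|_{H^1(\Rthree)}$, and using $\|\psi\|_2=\|\tilde\psi\|_2=1$ this gives
\begin{equation*}
\|\sigma_\psi-\sigma_{\tilde\psi}\|_2\le C\,\dist_{H^1(\Rthree)}(\psi,\Theta(\psi_{\mathrm{P}})).
\end{equation*}
The triangle inequality in $L^2$ and the fact that $\varphi_{\mathrm{P}}(\cdot-y_*)\in \Omega(\varphi_{\mathrm{P}})$ then yield
\begin{equation*}
\dist_{L^2(\Rthree)}(\varphi,\Omega(\varphi_{\mathrm{P}}))\le \|\varphi-\varphi_{\mathrm{P}}(\cdot-y_*)\|_2\le \|\varphi+\sigma_\psi\|_2+C\,\dist_{H^1(\Rthree)}(\psi,\Theta(\psi_{\mathrm{P}})),
\end{equation*}
and squaring with $(a+b)^2\le 2a^2+2b^2$ gives
\begin{equation*}
\dist_{L^2(\Rthree)}^2(\varphi,\Omega(\varphi_{\mathrm{P}}))\le 2\|\varphi+\sigma_\psi\|_2^2+2C^2\,\dist_{H^1(\Rthree)}^2(\psi,\Theta(\psi_{\mathrm{P}})).
\end{equation*}

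Finally, I insert Lemma~\ref{lemma:EQuadBounds}. Combining it with the identity $\Gfun(\psi,\varphi)=\Efun(\psi)+\|\varphi+\sigma_\psi\|_2^2$ gives
\begin{equation*}
\Gfun(\psi,\varphi)-e_{\mathrm{P}}\ge \kappa\,\dist_{H^1(\Rthree)}^2(\psi,\Theta(\psi_{\mathrm{P}}))+\|\varphi+\sigma_\psi\|_2^2\ge \tau\,\dist_{L^2(\Rthree)}^2(\varphi,\Omega(\varphi_{\mathrm{P}}))
\end{equation*}
with $\tau=\min\{\kappa/(2C^2),\,1/2\}$. Since $\psi$ with $\|\psi\|_2=1$ was arbitrary, taking the infimum in $\psi$ on the left side yields $\Ffun(\varphi)-e_{\mathrm{P}}\ge \tau\,\dist_{L^2(\Rthree)}^2(\varphi,\Omega(\varphi_{\mathrm{P}}))$, as claimed.

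There is no serious obstacle here: the only minor point is that the bound must hold for \emph{every} $\varphi\in L^2(\Rthree)$ (including ones for which the ground state of $h_\varphi$ does not exist), but the argument above never uses a minimizer of $\Ffun(\varphi)$ itself; it works pointwise in $\psi$ and then takes the infimum, so the only minimization we need to control is that of $\|\psi-\cdot\|_{H^1(\Rthree)}$ over $\Theta(\psi_{\mathrm{P}})$, which is harmless.
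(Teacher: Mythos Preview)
Your proof is correct and follows essentially the same approach as the paper: both fix an arbitrary normalized $\psi$, use the identity $\Gfun(\psi,\varphi)=\Efun(\psi)+\|\varphi+\sigma_\psi\|_2^2$, invoke Lemma~\ref{lemma:EQuadBounds} for the first term, and then use Lemma~\ref{lemma:potential} to convert the $H^1$-distance of $\psi$ to $\Theta(\psi_{\mathrm{P}})$ into an $L^2$-bound on $\sigma_\psi-\sigma_{\psi_{\mathrm{P}}^*}$. The only cosmetic difference is in how the two quadratic terms are combined: the paper completes the square to obtain $\tau=\kappa^*/(1+\kappa^*)$, whereas you use the cruder inequality $(a+b)^2\le 2a^2+2b^2$, leading to $\tau=\min\{\kappa/(2C^2),\,1/2\}$; either route suffices.
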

\begin{proof}
	Recalling that 
	\begin{equation}
	\Ffun(\varphi)=\inf_{\|\psi\|_2=1\atop \psi \in H^1(\Rthree)} \Gfun(\psi,\varphi)
	\end{equation}
	our claim trivially follows by showing that for any $L^2$-normalized $\psi \in H^1(\Rthree)$ and $\varphi\in L^2(\R^3)$ 
	\begin{equation}
	\label{eq:Gestimate}
	\Gfun(\psi,\varphi)-e_{\mathrm{P}}\geq \tau\, \dist_{L^2(\Rthree)}^2(\varphi,\Omega(\varphi_{\mathrm{P}})).
	\end{equation}
	For any such $\psi$ let $y^*\in \Rthree$ and $\theta^* \in [0,2\pi)$ be such that
	\begin{equation}
	\|\psi- e^{i\theta^*}\psi_{\mathrm{P}}(\,\cdot\, -y^*)\|^2_{H^1(\Rthree)}=\dist_{H^1(\Rthree)}^2(\psi,\Theta(\psi_{\mathrm{P}})),
	\end{equation}
	and denote $e^{i\theta^*}\psi_{\mathrm{P}}(\,\cdot\, -y^*)$ by $\psi_{\mathrm{P}}^*$. By using the previous Lemma \ref{lemma:EQuadBounds}, the fact that $\psi$ and $\psi_{\mathrm{P}}^*$ are $L^2$-normalized, \eqref{eq:sigmaH1bounds} and completing the square, we obtain for, some positive $\kappa^*>0$,
	\begin{align}
	\Gfun(\psi,\varphi)-e_{\mathrm{P}}&=\Efun(\psi)-e_{\mathrm{P}}+\|\varphi+\sigma_{\psi}\|^2_2\geq \kappa\|\psi-\psi_{\mathrm{P}}^*\|_{H^1(\Rthree)}^2+\|\varphi+\sigma_{\psi}\|_2^2\notag\\
	&\geq \kappa^*\|\sigma_{\psi}-\sigma_{\psi_{\mathrm{P}}^*}\|_2^2+\|\varphi+\sigma_{\psi}\|_2^2\notag\\
	&=\|(1+\kappa^*)^{1/2} (\sigma_{\psi_{\mathrm{P}}^*}-\sigma_{\psi})-(1+\kappa^*)^{-1/2}(\varphi+\sigma_{\psi_{\mathrm{P}}^*})\|_2^2  + \frac{\kappa^*}{1+\kappa^*} \| \varphi+\sigma_{\psi_{\mathrm{P}}^*} \|_2^2 \notag\\
	&\geq\frac{\kappa^*}{1+\kappa^*} \|\varphi-\varphi_{\mathrm{P}}(\,\cdot\, -y^*)\|_2^2\geq \frac{\kappa^*}{1+\kappa^*} \dist_{L^2(\Rthree)}^2(\varphi,\Omega(\varphi_{\mathrm{P}})).
	\end{align}
	This completes the proof of \eqref{eq:Gestimate}, and hence of the Lemma, with $\tau= \kappa^*/(1+\kappa^*)$. 
\end{proof}

\begin{remark}
\label{remark:coercivityconseq}
The two previous quadratic bounds on $\Efun$ and $\Ffun$ clearly imply, together with \eqref{def:EF}, that, for any $L^2$-normalized $\psi\in H^1(\Rthree)$ and any $\varphi \in L^2(\Rthree)$, having low energy guarantees closeness to the surfaces of minimizers $\Theta(\psi_{\mathrm{P}})$ and $\Omega(\varphi_{\mathrm{P}})$, i.e.
\begin{equation}
\Gfun(\psi,\varphi)\leq e_{\mathrm{P}}+\varepsilon \quad \Rightarrow \quad \Efun(\psi),\Ffun(\varphi)\leq e_{\mathrm{P}}+\varepsilon \quad \Rightarrow \quad \dist_{H^1}^2(\psi,\Theta(\psi_{\mathrm{P}})), \dist_{L^2}^2(\varphi,\Omega(\varphi_{\mathrm{P}}))\leq C\varepsilon \,.
\end{equation}
\end{remark}

Finally, we exploit the previous estimate to obtain the following Lemma. It states  that for couples $(\psi,\varphi)$ which have low energy  $\psi$ must be close to $\psi_{\varphi}$, the ground state of $h_{\varphi}$, and $\varphi$ is close to $-\sigma_{\psi_{\varphi}}$, in the following sense.

\begin{lemma}
\label{lemma:help}
Let $\varepsilon>0$ be sufficiently small, $\psi\in H^1(\Rthree)$ be $L^2$-normalized, $\varphi \in L^2(\Rthree)$ and let $(\psi,\varphi)$ be such that 
\begin{equation}
\label{eq:Gfuncondition}
\Gfun(\psi,\varphi) \leq e_{\mathrm{P}}+\varepsilon \,.
\end{equation}
Then $h_\varphi$ has a positive ground state $\psi_{\varphi}$, and there exists $C>0$ (independent of $(\psi,\varphi))$ such that
\begin{align}
\label{eq:2.6.1.1}
\min_{\theta\in[0,2\pi)} \| \psi-e^{i\theta}\psi_{\varphi}\|_{H^1(\Rthree)}^2 &\leq C\varepsilon,\\
\label{eq:2.6.1.2}
\|\varphi+\sigma_{\psi_{\varphi}}\|^2_2 & \leq C \varepsilon.
\end{align}
\end{lemma}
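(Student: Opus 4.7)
The plan is to combine the quadratic lower bound on $\Gfun$ coming from Remark \ref{remark:coercivityconseq} with the perturbative control of ground states provided by Lemma \ref{lemma:gs_pertub}. The starting observation is the orthogonal decomposition $\Gfun(\psi,\varphi)=\Efun(\psi)+\|\varphi+\sigma_\psi\|_2^2$: since $\Efun(\psi)\geq e_{\mathrm{P}}$ always, the assumption \eqref{eq:Gfuncondition} forces both $\Efun(\psi)-e_{\mathrm{P}}\leq \varepsilon$ and
\begin{equation}
\|\varphi+\sigma_\psi\|_2^2 \leq \varepsilon.
\end{equation}

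Next, applying Lemma \ref{lemma:EQuadBounds} to the first inequality produces $y^*\in \Rthree$ and $\theta^*\in[0,2\pi)$ with
\begin{equation}
\|\psi-e^{i\theta^*}\psi_{\mathrm{P}}(\,\cdot\, -y^*)\|_{H^1(\Rthree)}^2 \leq C\varepsilon.
\end{equation}
Using Lemma \ref{lemma:potential} (in the form \eqref{eq:sigmaH1bounds}, which is phase-invariant) together with the identity $\sigma_{\psi_{\mathrm{P}}(\,\cdot\, -y^*)}=-\varphi_{\mathrm{P}}(\,\cdot\, -y^*)$ and the triangle inequality I obtain
\begin{equation}
\|\varphi-\varphi_{\mathrm{P}}(\,\cdot\, -y^*)\|_2 \leq \|\varphi+\sigma_\psi\|_2 + \|\sigma_\psi-\sigma_{\psi_{\mathrm{P}}(\,\cdot\,-y^*)}\|_2 \leq C\sqrt{\varepsilon}.
\end{equation}

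Now I invoke Lemma \ref{lemma:gs_pertub} with reference point $\varphi_0=\varphi_{\mathrm{P}}(\,\cdot\, -y^*)$: by translation invariance, $\varphi_{\mathrm{P}}(\,\cdot\, -y^*)$ satisfies \eqref{eq:ass_old} with constants independent of $y^*$, so for $\varepsilon$ small enough the smallness assumption $\|\varphi-\varphi_{\mathrm{P}}(\,\cdot\,-y^*)\|_2\leq \delta_{\varphi_{\mathrm{P}}}$ is fulfilled. Hence $h_\varphi$ admits a unique positive normalized ground state $\psi_\varphi$, and
\begin{equation}
\|\psi_\varphi - \psi_{\mathrm{P}}(\,\cdot\, -y^*)\|_{H^1(\Rthree)} \leq C\|\varphi-\varphi_{\mathrm{P}}(\,\cdot\, -y^*)\|_2 \leq C\sqrt{\varepsilon}.
\end{equation}
The first claim \eqref{eq:2.6.1.1} then follows by the triangle inequality with the choice $\theta=\theta^*$:
\begin{equation}
\|\psi-e^{i\theta^*}\psi_\varphi\|_{H^1(\Rthree)} \leq \|\psi-e^{i\theta^*}\psi_{\mathrm{P}}(\,\cdot\, -y^*)\|_{H^1(\Rthree)} + \|\psi_{\mathrm{P}}(\,\cdot\, -y^*)-\psi_\varphi\|_{H^1(\Rthree)}\leq C\sqrt{\varepsilon}.
\end{equation}
For \eqref{eq:2.6.1.2}, I split
\begin{equation}
\|\varphi+\sigma_{\psi_\varphi}\|_2 \leq \|\varphi+\sigma_\psi\|_2 + \|\sigma_\psi-\sigma_{\psi_\varphi}\|_2,
\end{equation}
and control the first term directly by $\sqrt{\varepsilon}$ and the second term by applying \eqref{eq:sigmaH1bounds} together with the already-established \eqref{eq:2.6.1.1}.

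There is no real obstacle here beyond bookkeeping: the key conceptual input is that $\Gfun$ controls both the orbit distance of $\psi$ (through $\Efun$) and the discrepancy between $\varphi$ and $-\sigma_\psi$, which by Lipschitz continuity of $\psi\mapsto\sigma_\psi$ forces $\varphi$ to lie near the same translate of $\varphi_{\mathrm{P}}$ selected by $\psi$. The only mild subtlety is making sure the perturbative Lemma \ref{lemma:gs_pertub} can be applied with constants uniform in the translation parameter $y^*$, which is immediate from translation invariance of the spectrum of $h_\varphi$ under $\varphi\mapsto\varphi(\,\cdot\,-y^*)$.
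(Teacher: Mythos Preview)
Your proof is correct and follows the same overall strategy as the paper: use the identity $\Gfun(\psi,\varphi)=\Efun(\psi)+\|\varphi+\sigma_\psi\|_2^2$, apply Lemma~\ref{lemma:EQuadBounds} to locate a translate $\psi_{\mathrm{P}}(\,\cdot\,-y^*)$ close to $\psi$, transfer this to closeness of $\varphi$ to $\varphi_{\mathrm{P}}(\,\cdot\,-y^*)$, and then invoke Lemma~\ref{lemma:gs_pertub}.

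The one difference worth noting is that the paper additionally invokes Lemma~\ref{lemma:FQuadBounds} to produce a second translation parameter $y_2$ with $\|\varphi-\varphi_{\mathrm{P}}(\,\cdot\,-y_2)\|_2\leq C\sqrt\varepsilon$, and then spends a triangle-inequality step (their~\eqref{eq:estimate5}) to show that $y_2$ can be replaced by $y_1$. Your argument bypasses this detour entirely: you obtain $\|\varphi-\varphi_{\mathrm{P}}(\,\cdot\,-y^*)\|_2\leq C\sqrt\varepsilon$ directly from $\|\varphi+\sigma_\psi\|_2\leq\sqrt\varepsilon$ and \eqref{eq:sigmaH1bounds}, so Lemma~\ref{lemma:FQuadBounds} is never needed in this proof. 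This is a genuine (if minor) streamlining; the paper's route has the small advantage that existence of $\psi_\varphi$ is established up front via Lemma~\ref{lemma:specgap_pertub} before any of the quantitative estimates, whereas in your version existence comes out of Lemma~\ref{lemma:gs_pertub} once the bound on $\|\varphi-\varphi_{\mathrm{P}}(\,\cdot\,-y^*)\|_2$ is in hand. Both are fine.
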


\begin{proof}
Since  $\Ffun(\varphi)\leq \Gfun(\psi,\varphi)$ for any $L^2$-normalized $\psi \in H^1(\Rthree)$, Lemma \ref{lemma:FQuadBounds} implies that  for any $\delta>0$ there exists $\varepsilon_{\delta}>0$ such that $\dist_{L^2}(\varphi,\Omega(\varphi_{\mathrm{P}}))\leq \delta$ whenever $\Gfun(\psi,\varphi)\leq e_{\mathrm{P}}+\varepsilon_{\delta}$. Moreover, by Lemma \ref{lemma:specgap_pertub}, there  exists $\bar\delta>0$ such that  if $\dist_{L^2}(\varphi,\Omega(\varphi_{\mathrm{P}}))\leq \bar \delta$ then $\psi_{\varphi}$  exists. We then pick $\varepsilon=\varepsilon_{\bar \delta}$ and this guarantees that under the hypothesis of the Lemma $\psi_{\varphi}$ is well defined. 

Using Lemmas \ref{lemma:EQuadBounds} and \ref{lemma:FQuadBounds}, the assumption \eqref{eq:Gfuncondition} implies that there exist $y_1$ and $y_2$ such that
\begin{equation}
\label{eq:estimate1}
\min_{\theta\in [0.2\pi)}\|\psi-e^{i\theta}\psi_{\mathrm{P}}(\,\cdot\, -y_1)\|_{H^1(\Rthree)}^2\leq C\varepsilon,\quad \|\varphi-\varphi_{\mathrm{P}}(\, \cdot\, -y_2)\|_2^2\leq C \varepsilon.
\end{equation}
Moreover, since 
\begin{equation}
e_{\mathrm{P}}+\varepsilon\geq\Gfun(\psi,\varphi)=\Efun(\psi)+\|\varphi+\sigma_{\psi}\|_2^2\geq e_{\mathrm{P}}+\|\varphi+\sigma_{\psi}\|_2^2,
\end{equation} 
we also have
\begin{equation}
\label{eq:estimate2}
\|\varphi+\sigma_{\psi}\|_2^2\leq \varepsilon.
\end{equation}
In combination, the second bound in \eqref{eq:estimate1} and \eqref{eq:estimate2} imply
\begin{equation}
\label{eq:estimate3}
\|\varphi_{\mathrm{P}}(\,\cdot\, - y_2)+\sigma_{\psi}\|_2^2\leq C\varepsilon.
\end{equation}
Moreover, with the aid of  \eqref{eq:sigmaH1bounds} and the first bound in \eqref{eq:estimate1}, we obtain
\begin{equation}
\label{eq:estimate4}
\|\varphi_{\mathrm{P}}(\, \cdot\, -y_1)+\sigma_{\psi}\|_2^2=\|\sigma_{\psi_{\mathrm{P}}(\,\cdot\, -y_1)}-\sigma_{\psi}\|_2^2\leq C\min_{\theta\in [0,2\pi)}\|\psi-e^{i\theta}\psi_{\mathrm{P}}(\, \cdot\, -y_1)\|_{H^1}^2\leq C \varepsilon.
\end{equation}
By putting the second equation in \eqref{eq:estimate1}, \eqref{eq:estimate3} and \eqref{eq:estimate4} together, we can hence conclude that
\begin{equation}
\label{eq:estimate5}
\|\varphi-\varphi_{\mathrm{P}}(\, \cdot \, - y_1)\|_2\leq \|\varphi-\varphi_{\mathrm{P}}(\,\cdot\, -y_2)\|_2+\|\varphi_{\mathrm{P}}(\, \cdot\, -y_2)+\sigma_{\psi}\|_2+\|\sigma_{\psi}+\varphi_{\mathrm{P}}(\, \cdot\, -y_1)\|_2\leq C \varepsilon^{1/2}.
\end{equation}
Therefore, using Lemma \ref{lemma:gs_pertub}, we obtain
\begin{align}
\|\psi-e^{i\theta}\psi_{\varphi}\|_{H^1}&\leq \|\psi-e^{i\theta}\psi_{\mathrm{P}}(\,\cdot\, -y_1)\|_{H^1}+\|\psi_{\mathrm{P}}(\, \cdot\, -y_1)-\psi_{\varphi}\|_{H^1}  \notag \\ & = \|\psi-e^{i\theta}\psi_{\mathrm{P}}(\, \cdot\, -y_1)\|_{H^1}+\|\psi_{\varphi_{\mathrm{P}}(\, \cdot\, -y_1)}-\psi_{\varphi}\|_{H^1}\notag\\
&\leq \|\psi-e^{i\theta}\psi_{\mathrm{P}}(\, \cdot\, -y_1)\|_{H^1}+C \|\varphi_{\mathrm{P}}(\, \cdot\, -y_1)-\varphi\|_2  \,. 
\end{align}
This yields \eqref{eq:2.6.1.1} after taking the infimum over $\theta \in [0,2\pi)$ and using \eqref{eq:estimate5} and the first bound in \eqref{eq:estimate1}. To prove \eqref{eq:2.6.1.2}, we use \eqref{eq:estimate2}, \eqref{eq:sigmaH1bounds}, the normalization of $\psi$ and $\psi_{\varphi}$ and \eqref{eq:2.6.1.1} to obtain
\begin{equation}
\|\varphi+\sigma_{\psi_{\varphi}}\|_2\leq \|\varphi+\sigma_{\psi}\|_2+\|\sigma_{\psi}-\sigma_{\psi_{\varphi}}\|_2\leq \varepsilon^{1/2}+C\min_{\theta\in [0,2\pi)}\|\psi-e^{i\theta}\psi_{\varphi}\|_{H^1}\leq C\varepsilon^{1/2}.
\end{equation}
\end{proof}

\section{Proof of the Main Results }
\label{section3}

The conservation of $\Gfun$ along solutions of the Landau--Pekar equations 
allows to apply the tools developed in Section \ref{section2} to get results valid for all times. This will in particular allow us to prove the results stated in Section \ref{section1}.
When combined with energy conservation, Remark \ref{remark:coercivityconseq} shows that we can estimate the distance to the sets of Pekar minimizers of solutions of the Landau--Pekar equations only in terms of the energy of their initial data. Since $\Omega ( \varphi_{\mathrm{P}})$ contains only real-valued functions this  yields bounds on the $L^2$-norm of the imaginary part of $\varphi_t$.  
That is, there exists a $C>0$ such that if $( \psi_t, \varphi_t)$ solves the Landau--Pekar equations \eqref{eq:LP} with initial data $(\psi_0,\varphi_0)$, then  
\begin{align}\label{rmk:Im,Re}
\min_{\substack{y \in \mathbb{R}^3 \\ \theta \in [0, 2 \pi )}}\| \psi_t - e^{i \theta} \psi_{\mathrm{P}}(\,\cdot\, -y) \|&_{H^1( \mathbb{R}^3)}^2 \leq C (\Gfun(\psi_0,\varphi_0)-e_{\mathrm{P}}), \quad \| \Im\, \varphi_t \|_2^2 \leq C (\Gfun(\psi_0,\varphi_0)-e_{\mathrm{P}}), \nonumber\\
& \min_{y \in \mathbb{R}^3}\| \Re\,\varphi_t - \varphi_{\mathrm{P}}(\,\cdot\,-y) \|_2^2 \leq C (\Gfun(\psi_0,\varphi_0)-e_{\mathrm{P}})
\end{align}
for all $t\in \mathbb{R}$ and $\alpha>0$. It is then straightforward to obtain a proof of Theorem \ref{thm:specgap}.
 
\begin{proof}[Proof of Theorem \ref{thm:specgap}]
Let $0<\Lambda<\Lambda(\varphi_{\mathrm{P}})$ and let $( \psi_t, \varphi_t)$ denote the solution to the Landau--Pekar equations with initial data $( \psi_0, \varphi_0 )$ satisfying 
$
\Gfun(\psi_0,\varphi_0)\leq e_{\mathrm{P}}+\varepsilon_{\Lambda}
$. 
From~\eqref{rmk:Im,Re} we deduce that for any $t\in \R$ there exists $y_t \in \Rthree$ such that
\begin{equation}
\|\varphi_t-\varphi_{\mathrm{P}}(\,\cdot\, -y_t)\|_2^2\leq C\varepsilon_{\Lambda}
\end{equation}
for some $C>0$. 
Since the spectrum of $h_{\varphi_{\mathrm{P}}(\,\cdot\, -y)}$ and $\|\varphi_{\mathrm{P}}(\,\cdot\, -y)\|_2$ are independent of $y\in \Rthree$, Theorem \ref{thm:specgap} now follows immediately from Lemma \ref{lemma:specgap_pertub} by taking $\varepsilon_{\Lambda}= C^{-1}\delta_{\Lambda}^2$, where $\delta_{\Lambda}$ is the same as in Lemma \ref{lemma:specgap_pertub}.
\end{proof}

Conservation of energy also allows to extend the validity of Lemma \ref{lemma:help} for all times. If $(\psi_t,\varphi_t)$ solves \eqref{eq:LP}  with initial data $(\psi_0,\varphi_0)$ satisfying $\Gfun(\psi_0,\varphi_0)\leq e_{\mathrm{P}}+\varepsilon$ for a sufficiently small $\varepsilon$, then $\psi_{\varphi_t}$ is well defined for all times and 
\begin{equation}\label{rmk:psiphi}
\min_{\theta\in[0,2\pi)}\| \psi_t-e^{i\theta}\psi_{\varphi_t}\|_{H^1(\Rthree)}^2\leq C\varepsilon, \quad
\|\varphi_t+\sigma_{\psi_{\varphi_t}}\|^2_2\leq C \varepsilon.
\end{equation}
Moreover, Theorem \ref{thm:specgap} implies that for all times $\Lambda(\varphi_t)\geq \Lambda$ for a suitable $\Lambda>0$. It thus follows from Lemmas~\ref{lemma:wellposedness} and~\ref{lemma:resolvent} that for some $C>0$
\begin{equation}\label{3.4}
\| R_{\varphi_t} \| \leq C \quad \mathrm{and} \quad \| ( - \Delta + 1)^{1/2} R_{\varphi_t}^{1/2} \| \leq C  \quad \text{for  all} \quad t \in \mathbb{R},
\end{equation}
where as above $R_{\varphi_t} = q_t \left( h_{\varphi_t} - e( \varphi_t ) \right)^{-1} q_t$ and $q_t = 1- p_t = 1- \vert \psi_{\varphi_t} \rangle \langle \psi_{\varphi_t} \vert $.

With these preparations, we are now ready to prove Corollary 
\ref{cor:adiab}.

\begin{proof}[Proof of Corollary \ref{cor:adiab}] The proof follows closely the ideas of the proof of \cite[Theorem II.1]{LRSS}, hence we allow ourselves to be a bit sketchy at some points and  refer to \cite{LRSS} for more details.
It follows from the Landau--Pekar equations \eqref{eq:LP}  that
\begin{equation}\label{3.5}
\alpha^2 \partial_t V_{\varphi_t}  =  V_{\Im\, \varphi_t}, \quad \alpha^2 \partial_t V_{\Im\, \varphi_t} = - V_{ \Re\, \varphi_t + \sigma_{\psi_t}}.
\end{equation}
Lemmas~\ref{lemma:wellposedness}--\ref{lemma:resolvent} imply, together with \eqref{rmk:Im,Re}, that there exists $C>0$ such that 
\begin{equation}\label{3.6}
\| R_{\varphi_t} V_{\Im\, \varphi_t} \|^2 \leq C \varepsilon \quad \mathrm{for \, all } \quad t \in \mathbb{R}.
\end{equation}
In the same way, by the triangle inequality, Lemma \ref{lemma:potential} and  \eqref{rmk:psiphi}, there exists $C>0$ such that 
\begin{equation}\label{3.7}
\| R_{\varphi_t} V_{\Re\, \varphi_t + \sigma_{\psi_t}} \|^2 \leq C \min_{\theta \in (0, 2 \pi ]}\| \psi_t - e^{i \theta } \psi_{\varphi_t} \|_{H^1( \mathbb{R}^3)}^2 + C \| \Re\, \varphi_t + \sigma_{\psi_{\varphi_t}} \|_2^2  \leq C \varepsilon   \quad \mathrm{for \, all } \quad t \in \mathbb{R}.
\end{equation}
Moreover, it follows from
\begin{equation}\label{3.8}
\alpha^2 \partial_t \psi_{\varphi_t} = - R_{\varphi_t} V_{\Im \, \varphi_t} \psi_{\varphi_t}
\end{equation}
that 
\begin{equation}
\label{eq:deriv_R}
\alpha^2 \partial_t R_{\varphi_t} =  p_t V_{\Im\, \varphi_t} R_{\varphi_t}^2 + R_{\varphi_t}^2 V_{\Im\, \varphi_t} p_t - R_{\varphi_t}  \left( V_{\Im\, \varphi_t} - \langle \psi_{\varphi_t}, V_{\Im\, \varphi_t} \psi_{\varphi_t} \rangle \right) R_{\varphi_t} 
\end{equation}
(see \cite[Lemma IV.2]{LRSS}) and by the same arguments as above that
\begin{equation}\label{3.9}
\| \left( - \Delta + 1\right)^{1/2} \partial_t R_{\varphi_t} \left( - \Delta + 1 \right)^{1/2} \| \leq C\varepsilon^{1/2} \alpha^{-2} \quad \text{for all} \quad t \in \mathbb{R}. 
\end{equation}

Recall the definitions of $\widetilde{\psi}_t$  and $\nu$ in \eqref{eq:phase_add}. 
The same computations as in \cite[Eqs.~(58)--(65)]{LRSS}, using 
\begin{equation}
\label{eq:PI}
q_t \, e^{i \int_0^t ds \, e( \varphi_s )} \psi_t = i\, R_{\varphi_t} \, \partial_t \,e^{i \int_0^t ds \, e( \varphi_s )} \psi_{t}
\end{equation}
and integration by parts, lead to 
\begin{subequations}\label{3.11}
\begin{align}
\| \widetilde{\psi}_t - \psi_{\varphi_t} \|_2^2 &=  2 \alpha^{-2} \Im\,  \langle \widetilde{\psi}_t, \, R_{\varphi_t}^2 V_{\Im\, \varphi_t} \psi_{\varphi_t} \rangle\label{eq:boundary} \\
&\quad +  2 \alpha^{-2} \int_0^t ds \, \nu (s) \,  \Re \langle  \widetilde{\psi}_s, \, R_{\varphi_s}^2 V_{\Im\, \varphi_s} \psi_{\varphi_s} \rangle \label{eq:Phase}\\
&\quad + 2 \alpha^{-4} \int_0^t ds \, \Im \langle \widetilde{\psi}_s, \, R_{\varphi_s} \left( R_{\varphi_s} V_{\Im\, \varphi_s}  \right)^2 \psi_{\varphi_s} \rangle \label{eq:deriv_gs}\\
&\quad + 2 \alpha^{-4} \int_0^t ds \, \Im \langle \widetilde{\psi}_s, \, R_{\varphi_s}^2 V_{\Re\, \varphi_s + \sigma_{\psi_s}}  \psi_{\varphi_s} \rangle\label{eq:deriv_pot} \\
&\quad - 2 \alpha^{-2} \int_0^t ds \,\left(  \Im \langle \widetilde{\psi}_s, \, \left( \partial_s R_{\varphi_s}^2 \right) V_{\Im\, \varphi_s} \psi_{\varphi_s}\rangle + \alpha^2  \nu (s)  \, \Im \langle \widetilde{\psi}_s,\,  \psi_{\varphi_s} \rangle \right) \label{eq:deriv_R^2} .
\end{align}
\end{subequations}
The difference to the calculations in \cite{LRSS} are the additional terms \eqref{eq:Phase} and the second term in \eqref{eq:deriv_R^2} resulting from the phase $\nu$. While \eqref{eq:Phase} is, as we show below, only  a subleading error term, the phase  in \eqref{eq:deriv_R^2} leads to a crucial cancellation. This cancellation allows to integrate by parts once more, and finally results in the improved estimate in Corollary \ref{cor:adiab}. 

We shall now estimate the various terms in \eqref{3.11}. Since $\| q_t \widetilde{\psi}_t \|_2 \leq  \| \widetilde{\psi}_t  - \psi_{\varphi_t}\|_2$,  we find for the first term using  \eqref{3.4} and \eqref{3.6} 
\begin{equation}\label{a}
\vert \eqref{eq:boundary} \vert \leq C \alpha^{-2} \eps^{1/2} \| \widetilde{\psi}_t - \psi_{\varphi_t} \|_2 \leq \delta \| \widetilde{\psi}_t - \psi_{\varphi_t} \|_2^2 + C \delta^{-1} \alpha^{-4} \varepsilon 
\end{equation}
for arbitrary $\delta >0$. Moreover, we have $\vert \nu (s) \vert \leq C \alpha^{-4} \varepsilon $ for all $s\in \R$, and   find for the second term
\begin{equation}\label{b}
\vert  \eqref{eq:Phase} \vert \leq C \alpha^{-6} \eps^{3/2} \int_0^t ds\, \| \widetilde{\psi}_s - \psi_{\varphi_s} \|_2 \,. 
\end{equation}
For the third term, we integrate by parts using \eqref{eq:PI} once more, with the result that
\begin{align}
\eqref{eq:deriv_gs} &= -2 \alpha^{-4}\,  \Re\, \langle \widetilde{\psi}_t, \, R_{\varphi_t}^2 \left( R_{\varphi_t} V_{\Im\, \varphi_t} \right)^2 \psi_{\varphi_t} \rangle  + 2 \alpha^{-4} \int_0^t ds \, \nu (s) \, \Im\, \langle \widetilde{\psi}_s, R_{\varphi_s}^2 \left( R_{\varphi_s} V_{\Im\, \varphi_s} \right)^2 \psi_{\varphi_s } \rangle \notag \\
&\quad + 2 \alpha^{-4} \int_0^t ds \, \Re\, \langle \widetilde{\psi}_s, \, \partial_s \left( R_{\varphi_s}^2 \left( R_{\varphi_s} V_{\Im\, \varphi_s} \right)^2 \psi_{\varphi_s}  \right)  \rangle .
\end{align}
The first two terms can be bounded in the same way as \eqref{eq:boundary} and \eqref{eq:Phase}. For the third term, note that the r.h.s. of the inner product depends on time $s$ through $\varphi_s$ only, hence its time derivative leads to another factor of $\alpha^{-2}$. With \eqref{3.5}, \eqref{3.8} and \eqref{eq:deriv_R} we compute its time derivative. From the time derivative of the resolvent in  \eqref{eq:deriv_R}, we obtain one term for which the projection $p_s$ hits $\widetilde{\psi}_s$ on the l.h.s. of the inner product, in which case we can only bound $\| p_s \widetilde{\psi}_s \|_2 \leq  1$. For the remaining terms, we use  $\| q_s \widetilde{\psi}_s \|_2 \leq \| \widetilde{\psi}_s - \psi_{\varphi_s} \|_2$ instead. With the same arguments as above and \eqref{3.7},  we obtain
\begin{equation}\label{c}
\vert \eqref{eq:deriv_gs} \vert \leq \delta \| \widetilde{\psi}_t - \psi_{\varphi_t} \|_2^2  + C \delta^{-1} \alpha^{-8} \varepsilon^2 +C  \alpha^{-6} \eps \int_0^t ds \, \| \widetilde{\psi}_s - \psi_{\varphi_s} \|_2  + C \alpha^{-6} \eps^{3/2} |t| 
\end{equation}
for any $\delta>0$. 
For the forth term \eqref{eq:deriv_pot}, we first split 
\begin{equation}
\label{eq:split_pot}
\eqref{eq:deriv_pot} =
 2 \alpha^{-4} \int_0^t ds \, \left( \Im\, \langle \widetilde{\psi}_s, \, R_{\varphi_s}^2 V_{\sigma_{\psi_s}- \sigma_{\psi_{\varphi_s}}  }  \psi_{\varphi_s} \rangle +  \Im\, \langle \widetilde{\psi}_s, \, R_{\varphi_s}^2 V_{  \Re\,\varphi_s +  \sigma_{\psi_{\varphi_s}} }  \psi_{\varphi_s} \rangle \right)\,.
\end{equation}
Lemmas~\ref{lemma:wellposedness}--\ref{lemma:resolvent} and \eqref{3.4} imply that we can bound $\|R_{\varphi_s}^2 V_{\sigma_{\psi_s}- \sigma_{\psi_{\varphi_s}}  } \|\leq C \| \widetilde \psi_s - \psi_{\varphi_s} \|_2$  in the first term. For the second term, we observe that the r.h.s. of the inner product depends  on $s$ again only through $\varphi_s$, whose time derivative is of order $\alpha^{-2}$. We thus again use \eqref{eq:PI} and integration by parts, and proceed as above. For the calculation, we need to bound the time derivative of $\sigma_{\psi_{\varphi_s}}$, which can be done with the aid \cite[Lemma~II.4]{LMRSS}, with the result that $\| \partial_s \sigma_{\psi_{\varphi_s}} \|_2\leq C \eps^{1/2} \alpha^{-2}$. Altogether, this shows  
that
\begin{align}\label{d}
\vert \eqref{eq:deriv_pot} \vert  & \leq  C  \alpha^{-4} \int_0^t ds \, \| \widetilde{\psi}_s - \psi_{\varphi_s} \|_2^2 + \delta \| \widetilde{\psi}_t - \psi_{\varphi_t} \|_2^2  + C \delta^{-1} \alpha^{-8} \varepsilon \notag \\
&\quad +C  \alpha^{-6} \eps^{1/2} \int_0^t ds \, \| \widetilde{\psi}_s - \psi_{\varphi_s} \|_2 + C \alpha^{-6} \eps |t| 
\end{align}
for any $\delta>0$. 
For the last term, we compute using \eqref{eq:deriv_R} 
\begin{align}
\eqref{eq:deriv_R^2} & = -6 \alpha^{-4} \int_0^t ds \, \Im \langle \widetilde{\psi}_s, \,  R_{\varphi_s}^3 V_{\Im\, \varphi_s} p_s  V_{\Im\, \varphi_s} \psi_{\varphi_s} \rangle\notag \\
&\quad + 2 \alpha^{-4} \int_0^t ds \, \Im \langle \widetilde{\psi}_s, \, \left(  R_{\varphi_s}^2 V_{\Im\, \varphi_s} R_{\varphi_s} + R_{\varphi_s} V_{\Im\, \varphi_s} R_{\varphi_s}^2 \right) V_{\Im\, \varphi_s} \psi_{\varphi_s} \rangle . \label{eq:deriv_R^2_spelledout}
\end{align}
Note that the phase $\nu(s)$ cancels the contribution of $\partial_s R_{\varphi_s} $ projecting onto $\psi_{\varphi_s}$ (the first term of \eqref{eq:deriv_R}). This cancellation is important, since the integration by parts argument using \eqref{eq:PI} would not be applicable to this term. It can be applied to all the terms in \eqref{eq:deriv_R^2_spelledout}, however, proceeding as above, with the result that
\begin{equation}\label{e}
| \eqref{eq:deriv_R^2}| \leq  \delta \| \widetilde{\psi}_t - \psi_{\varphi_t} \|_2^2  + C \delta^{-1} \alpha^{-8} \varepsilon^2 +C  \alpha^{-6}\eps \int_0^t ds \, \| \widetilde{\psi}_s - \psi_{\varphi_s} \|_2 + C \alpha^{-6} \eps^{3/2} |t| 
\end{equation}
for any $\delta>0$. 

Collecting the bounds in \eqref{a}, \eqref{b}, \eqref{c}, \eqref{d} and \eqref{e},  Eq.~\eqref{3.11} shows that
\begin{align}\nonumber
\| \widetilde{\psi}_t - \psi_{\varphi_t} \|_2^2  & \leq C  \alpha^{-4} \varepsilon  + C \alpha^{-6} \eps^{1/2} \int_0^t ds\, \| \widetilde{\psi}_s - \psi_{\varphi_s} \|_2 + C  \alpha^{-4} \int_0^t ds \, \| \widetilde{\psi}_s - \psi_{\varphi_s} \|_2^2+ C \alpha^{-6} \eps |t| \\  &\leq C  \alpha^{-4} \varepsilon  + C  \alpha^{-4} \int_0^t ds \, \| \widetilde{\psi}_s - \psi_{\varphi_s} \|_2^2  + C \alpha^{-6} \eps |t|  
\end{align}
for $\alpha\gtrsim 1$ and $\eps\lesssim 1$. 
A Gronwall type argument finally yields the desired bound \eqref{eq:adiab_comb}. 
\end{proof}

\appendix

\section{Appendix: Proof of Lemma \ref{lemma:EQuadBounds}}
\label{appendixA}

In this appendix we give the proof of Lemma \ref{lemma:EQuadBounds}. As already mentioned, the result follows from 
the work in \cite{lenzmann2009uniqueness} by standard arguments. We follow closely the proof  given in \cite{frank2019quantum} of a corresponding result in the slightly different setting of a confined polaron. 

\begin{proof}[Proof of Lemma \ref{lemma:EQuadBounds}]
	\textbf{Step 1}: For any $L^2$-normalized $\psi \in H^1(\Rthree)$, there  exists $\bar\theta\in [0,2\pi)$ and $\bar y\in \Rthree$ such that
	\begin{equation}
	\|e^{i\bar\theta}\psi(\, \cdot\, -\bar y)-\psi_{\mathrm{P}}\|_2=\min_{y, \theta} \|e^{i\theta}\psi(\, \cdot\, - y)-\psi_{\mathrm{P}}\|_2.
	\end{equation}
	By invariance of $\Efun$ under translations and changes of phase, it is then sufficient to show that for any $L^2$-normalized $\psi$ such that
	\begin{equation}
	\label{eq:conditions}
	\|\psi-\psi_{\mathrm{P}}\|_2=\min_{y, \theta} \|\psi-e^{i\theta}\psi_{\mathrm{P}}(\, \cdot \, -y)\|_2, 
	\end{equation}
	the inequality
	\begin{equation}
	\label{eq:goal}
	\Efun(\psi)-e_{\mathrm{P}}\geq \kappa \|\psi-\psi_{\mathrm{P}}\|_{H^1(\Rthree)}^2
	\end{equation}
	holds (for some $\kappa>0$ independent of $\psi$). In fact, this is stronger than the desired bound \eqref{2.33}. 
	We henceforth only work with $L^2$-normalized $\psi$ satisfying \eqref{eq:conditions}, and denote $\delta=\psi-\psi_{\mathrm{P}}$. Observe that any $\psi$ satisfying \eqref{eq:conditions} also satisfies
	\begin{equation}
	\bra{\psi}\ket{\psi_{\mathrm{P}}}\geq 0, \quad \bra{\psi}\ket{\partial_i\psi_{\mathrm{P}}}=0 \  \text{for} \ i=1,2,3.
	\end{equation}
	
	\textbf{Step 2}: We first prove the quadratic lower bound \eqref{eq:goal} locally around $\psi_{\mathrm{P}}$ for any $L^2$-normalized $\psi$ satisfying \eqref{eq:conditions}. By straightforward computations, using that 
	\begin{equation}
	\label{eq:deltaprop}
	\|\delta\|_2^2=2-2 \bra{\psi_{\mathrm{P}}}\ket{\psi}=-2\bra{\psi_{\mathrm{P}}}\ket{\delta}
	\end{equation}
	since both $\psi_{\mathrm{P}}$ and $\psi$ are $L^2$-normalized, 
	we obtain
	\begin{equation}	\label{eq:Efunexp}
	\Efun(\psi)-e_{\mathrm{P}}=\text{Hess}_{\psi_{\mathrm{P}}}(\delta)+O(\|\delta\|^3_{H^1(\Rthree)}), 
	\end{equation}
	with
	\begin{align}
	\text{Hess}_{\psi_{\mathrm{P}}}(\delta) & =\expval{QL_{-}Q}{\Im\, \delta}+\expval{QL_{+}Q}{\Re\, \delta},\notag\\
	Q & =1-\ket{\psi_{\mathrm{P}}}\bra{\psi_{\mathrm{P}}},\notag \\
	L_{-} &=h_{\varphi_{\mathrm{P}}}  -e(\varphi_{\mathrm{P}}),\notag\\
	L_{+}& =L_{-}-4X, \notag\\
	X & =  (2\pi)^3 \psi_{\mathrm{P}} (-\Delta)^{-1} \psi_{\mathrm{P}} \,,
	\end{align}
	where in the last formula for $X$, $\psi_{\mathrm{P}}$ has to be understood as a multiplication operator.
	
	The Euler--Lagrange equation for the minimization of $\Efun$ reads $L_{-} \psi_{\mathrm{P}}=0$, and since $L_{-}$ is a Schr\"odinger operator and $\psi_{\mathrm{P}}$ is strictly positive, $L_{-}$ has   $0$ as its lowest eigenvalue, and a gap above.  Therefore we have
	\begin{equation}
	QL_{-}Q\geq \kappa_1 Q
	\end{equation}
	for some $\kappa_1>0$. 
	Moreover, it was shown in \cite{lenzmann2009uniqueness} that the kernel of $L_{+}$ coincides with $\text{span}_{i=1,2,3}\{\partial_i  \psi_{\mathrm{P}}\}$ and from this we can infer the existence of a $\kappa_2 > 0$ such that
	\begin{equation}
	QL_{+}Q\geq \kappa_{2} Q' \quad \text{with} \ Q'=Q-\sum_{i=1}^3 \|\partial_i \psi_{\mathrm{P}}\|_2^{-2} \ket{\partial_i \psi_{\mathrm{P}}}\bra{\partial_i \psi_{\mathrm{P}}}.
	\end{equation}
	Recall that $Q' \delta= Q \delta$ by assumption on $\psi$ and orthogonality of $\psi_{\mathrm{P}}$ to its partial derivatives. With 
	$\kappa' =\min\{\kappa_1,\kappa_2\}$ we thus have
	\begin{align}
	\text{Hess}_{\psi_{\mathrm{P}}}(\delta)\geq \kappa_1\|Q\Im\, \delta\|_2^2+\kappa_2\|Q' \Re\,  \delta\|_2^2\geq \kappa' \|Q\delta\|^2_2.
	\end{align}
	Using again \eqref{eq:deltaprop} we see that
	\begin{align}
	\|Q\delta\|_2^2=\|\delta\|_2^2-\bra{\psi_{\mathrm{P}}}\ket{\delta}^2=\|\delta\|_2^2\left(1-\frac 1 4 \|\delta\|_2^2\right)\geq \frac 1 2 \|\delta\|_2^2,
	\end{align}
	which finally implies that 
	\begin{align}
	\label{eq:L2hess}
	\text{Hess}_{\psi_{\mathrm{P}}}(\delta)\geq \frac {\kappa '} 2 \|\delta\|_2^2.
	\end{align}
	
	We now want to improve this bound to include the full $H^1$-norm of $\delta$. Using the regularity of $\psi_{\mathrm{P}}$ it is rather straightforward to show that
	\begin{align}
	L_{-}   =QL_{-}Q &\geq -\Delta-C \,, \notag\\
	QL_{+}Q & \geq -\Delta -C
	\end{align}
	which implies, that
	\begin{equation}
	\label{eq:H1hess}
	\text{Hess}_{\psi_{\mathrm{P}}}(\delta)\geq \|\delta\|_{H^1}^2-C\|\delta\|_2^2.
	\end{equation}
	By interpolating between \eqref{eq:L2hess} and \eqref{eq:H1hess}, we finally obtain
	\begin{equation}
	\text{Hess}_{\psi_{\mathrm{P}}}(\delta)\geq \frac {\kappa'}{\kappa'+2C}\|\delta\|_{H^1}^2=\kappa''\|\delta\|_{H^1}^2.
	\end{equation}
	In combination with \eqref{eq:Efunexp}, we conclude that
	\begin{equation}\label{A.16}
	\Efun(\psi)-e_{\mathrm{P}}\geq \kappa''\|\delta\|_{H^1}^2-C\|\delta\|_{H^1}^3
	\end{equation}
	for any $L^2$-normalized $\psi$ satisfying \eqref{eq:conditions}, 
	which shows that \eqref{eq:goal} holds  for $\|\delta\|_{H^1}$ sufficiently small.
		
	\textbf{Step 3}: We now extend the previous local bound to show that \eqref{eq:goal} holds globally. Suppose by contradiction that there does not exist a universal $\kappa$ such that \eqref{eq:goal} holds.  Then there exists a sequence $\psi_n$ of $L^2$-normalized functions satisfying \eqref{eq:conditions} such that
	\begin{align}
	\label{eq:boundsequence}
	\Efun(\psi_n)\leq e_{\mathrm{P}}+\frac 1 n \|\psi_n-\psi_{\mathrm{P}}\|_{H^1}^2\leq \frac 2 n \|\psi_n\|_{H^1}^2 + C\,.
	\end{align}
	One readily checks that 
	\begin{align}
	\Efun(\psi_n)\geq \frac 1 2 \|\psi_n\|_{H^1}^2-C \,, 
	\end{align}
	hence $\psi_n$ must be bounded in $H^1(\Rthree)$.  Again using \eqref{eq:boundsequence}, we conclude that $\psi_n$ must be a minimizing sequence for $\Efun$. It was proven in \cite{L} that any minimizing sequence converges in $H^1(\Rthree)$ to a minimizer of $\Efun$, i.e., an element of $\Theta(\psi_{\mathrm{P}})$ in \eqref{def:Theta}, and since $\psi_n$ satisfies \eqref{eq:conditions} this implies that $\psi_n\xrightarrow{H^1} \psi_{\mathrm{P}}$. This yields a contradiction, since we already know by \eqref{A.16} that locally the bound \eqref{eq:goal} holds. 
\end{proof}

\paragraph{Acknowledgments.} 
Funding from the European Union's Horizon 2020 research and innovation programme under the ERC grant agreement No 694227 (D.F. and R.S.) and under the Marie Sk\l{}odowska-Curie  Grant Agreement No. 754411 (S.R.) is gratefully acknowledged.

\vspace{0.5cm}

\noindent
(Dario Feliciangeli) Institute of Science and Technology Austria (IST Austria)\\ Am Campus 1, 3400 Klosterneuburg, Austria\\ 
E-mail address: \texttt{dario.feliciangeli@ist.ac.at} \\

\noindent
(Simone Rademacher) Institute of Science and Technology Austria (IST Austria)\\ Am Campus 1, 3400 Klosterneuburg, Austria\\ 
E-mail address: \texttt{simone.rademacher@ist.ac.at} \\

\noindent
(Robert Seiringer) Institute of Science and Technology Austria (IST Austria)\\ Am Campus 1, 3400 Klosterneuburg, Austria\\ 
E-mail address: \texttt{robert.seiringer@ist.ac.at}

\end{document}